\documentclass[nohypdvips,reqno]{siamart251216}

\usepackage{fontspec}

\usepackage{polyglossia}

\setdefaultlanguage{english}

\usepackage{tikz}
\usetikzlibrary{positioning}
\usepackage{ulem}
\usepackage{csquotes}
\usepackage[a4paper,left=2cm,right=2cm]{geometry}
\usepackage{ dsfont }
\usepackage{ upgreek }
\usepackage{hyperref}
\usepackage{xcolor}
\hypersetup{
    colorlinks,
    linkcolor={black!50!black},
    citecolor={black!50!black},
    urlcolor={black!80!black}
}

\usepackage{amssymb} 
\usepackage{graphicx}
\usepackage{ mathrsfs }
\usepackage{mathtools}
\usepackage{cleveref}

\def\be#1#2\ee{\begin{equation}\label{eq:#1}#2\end{equation}}
\def\req#1{{\rm(\ref{eq:#1})}}

\newcommand*\diff{\mathop{}\!\mathrm{d}}

\newcommand*\dx{\diff x}
\newcommand\dxx{\diff \textit{\textbf{x}}}
\newcommand*\dy{\diff y}
\newcommand*\dyy{\diff \textit{\textbf{y}}}

\newcommand*\dr{\diff r}

\newcommand{\BF}{\boldsymbol{F}_{\{x_1,x_2\}}}

\newcommand*{\R}{\mathbb{R}}
\newcommand*{\N}{\mathbb{N}}
\newcommand*{\A}{\mathcal{A}}

\newcommand{\fexp}{\exp^\ast}

\newcommand{\bP}{\boldsymbol{\Psi}}
\newcommand{\bPH}{\boldsymbol{\Phi}}
\newcommand{\bro}{\boldsymbol{\rho}}
\newcommand{\bom}{\boldsymbol{\omega}}
\newcommand{\bF}{\boldsymbol{F}}
\newcommand{\xx}{\textit{\textbf{x}}}
\newcommand{\yy}{\textit{\textbf{y}}}

\newtheorem{example}{Example}
\newtheorem{remark}{Remark}

\newcommand{\PP}{\ensuremath{\mathsf{P}}}
\newcommand{\KK}{\ensuremath{\mathsf{K}}}

\newcommand*\dP{\diff \PP}
\renewcommand{\R}{\mathbb{R}}

\newtheorem{assumption}{Assumption}

\newcommand{\fod}[1]{\omega_1^{(1+#1)}}
\newcommand{\secod}[1]{\omega_2^{(2+#1)}}
\newcommand{\rhot}[1]{\rho^{(#1)}_T}

\title{An inversion formula for the 2-body interaction given the correlation functions}
\author{F.~Frommer\thanks{Institut f\"ur Mathematik, Johannes
    Gutenberg-Universit\"at Mainz, 55099 Mainz, Germany
    ({\tt fafromme@uni-mainz.de})} \and
    T.~Kuna\thanks{Dipartimento di Ingegneria e scienze dell'informazione e matematica, 
        Università degli Studi dell'Aquila,
        Palazzo Camponeschi, piazza Santa Margherita 2, 67100 L'Aquila, Italy
    ({\tt     ({\tt tobias.kuna@univaq.it})}}, \and 
        D.~Tsagkarogiannis\thanks{Dipartimento di Ingegneria e scienze dell'informazione e matematica, 
        Università degli Studi dell'Aquila,
        Palazzo Camponeschi, piazza Santa Margherita 2, 67100 L'Aquila, Italy ({\tt dimitrios.tsagkarogiannis@univaq.it})}. 
	}

\begin{document}
%\newfontfamily{\hebrewfonttt}[Script=Hebrew]{Miriam Mono CLM}
\maketitle
\sloppy
\begin{abstract}
Given a classical gas described by the truncated correlation functions of all orders, 
we prove convergence of an expansion of the pair interaction part of the (unknown) potential in terms of the truncated correlation functions of all orders, at infinite volume.  
\end{abstract}

\section{Introduction}

The main goal of statistical physics is to derive
thermodynamic quantities from microscopic models of interacting particles based on first principles. 
At equilibrium, isolated systems are described via the Gibbs measures: a possible configuration of particles is more likely when it is energetically favorable. However, when working with actual data, these interactions are impossible to measure. What instead can be either measured experimentally or read out of simulation data are the so called correlation functions $\rho^{(n)}$ of the corresponding Gibbs measure. This gives rise to the inverse question: given the correlation functions, can one determine the $n$-body interactions that produce it? This is of fundamental importance, since it may permit designing systems with special properties (as induced by the correlations) by properly tuning the atomistic parameters of the system such as the interaction potential. From a theoretical point of view, i.e., concerning the existence and uniqueness of a solution, this problem has been investigated for the cases $n=1,2$ in, e.g.~\cite{Chayes83},\cite{Koral07},\cite{FFMH19}.
In practice, this is relevant from a computational point of view as well: it provides a strategy for the design of efficient computational methods by suggesting effective models at a coarser and computationally more tractable scale by matching the correlation functions, cf.~\cite{Noid13b,Peter09}.

The main idea can be briefly described as follows: for a microscopic system with a pair interaction, one can write the expansion for the pair correlation as a power series in terms of the density of the system.
By keeping the first term of this expansion, one relates the pair correlation function to the pair potential with some error due to the higher order terms. Suppose now that for a microscopic system with an unknown interaction potential (possibly multi-body), one has the pair correlation function by some experiment or simulation. Using the previous formula, we can compute an approximate (due to the neglected higher order corrections) pair interaction potential. With that, we can simulate a new correlation function, propagating the original error. The question is how to control and hopefully improve this error. One suggestion is to iterate this procedure and eventually find a good candidate for an effective pair interaction potential. By construction, it will approximate the original pair-correlation function well, but it might not fit well with pressure or other thermodynamic quantities. This method is the well-known {\it Iterative Boltzmann inversion} \cite{Muellerplathe02,Muellerplathe03}, with many variants {such as the inverse Monte-Carlo approach \cite{Lyubartsev95,Lyubartsev04}} and attempts to also match the thermodynamic behavior, e.g.~\cite{Clark13,McCarty12}. \\

Alternatively, another suggestion is, instead of iterating, to invert and express the pair potential as a power series of the pair correlation function and compute the next order in the expansion. This was established in \cite{Gstell64} and in \cite{Morita61}, but to our knowledge, the proof of the radius of convergence is still pending. 

In the present paper, instead of inverting the relation between the pair correlation/interaction function, we consider the correlations of all orders and relate them to the potentials of all orders. The main idea is to exploit two  well known relations; for each bounded subset of the full space one can define the Janossy densities of the Gibbs measure, that is the marginal obtained only observing the point processes in $\Lambda$. The first relation relates the Janossy density to the correlation functions of the point processes via an explicit formula. The second relation is  the {GNZ}-equation itself. Comparing the two formulas, we obtain an expression of the pair interaction as a power series with respect to correlation functions of all orders in the limit $\Lambda \uparrow \R^d$. This idea goes back to an approach by Nettleton and Green for a grand canonical ensemble, cf.~\cite{RNMG57}. The case of the one body interaction was addressed in a previous paper by one of the authors \cite{Frommer24}. Here, we consider the two body interaction leading to more challenging combinatorial considerations. As expected, the next order correction for the two body interaction is given by a combination of correlation functions up to third order. This connects nicely with various attempts to compute three-body interactions either directly \cite{Dimitris} or combining ideas from stochastic geometry and statistics \cite{Gottschalk,StoyanStoyan} and opens the discussion for future investigation on finding the most efficient method.

The structure of the paper is as follows: after introducing the various relevant quantities, in Section~\ref{sec2} we recall the Ruelle-calculus which provides a natural framework for our calculations. The main results are presented in Section~\ref{sec3} together with their proofs. We conclude with some examples in Section~\ref{sec4}.

\section{Setting}\label{sec2}
A point process $\PP$ is a probability measure on the set of configurations 
\begin{align*}
    \Gamma = \{\eta \subset \R^d \mid N(\eta \cap \Delta )< \infty \text{ for all } \Delta \subset \R^d \text{ bounded}\}
\end{align*}
equipped with the $\sigma$-algebra $\mathcal{F}:=\sigma\left(\{N(\cdot\cap \Delta)=m\}\mid m\in\N_0, \,\,\Delta\subset \R^d \text{ bounded}\right)$. Here $N(\eta)=\#\eta$ is the number of elements of $\eta\subset \R^d$. We denote by $\Gamma_0= \{\gamma \in \Gamma \mid N(\gamma)< \infty\}$ the set of finite configurations. Such a point process is a Gibbs measure for some chemical potential $\mu\in\R$ and a Hamiltonian $H\colon \Gamma\to \R\cup\{\infty\}$, we write $\PP$ is a $(\mu,H)$-Gibbs measure, if it is supported on a suitable set of \emph{tempered} configurations and for corresponding conditions on the Hamiltonian 
it satisfies the \emph{GNZ-equation}, see \cite{Zessin79}, namely if, for every non-negative function $G\colon\Gamma \to [0,\infty]$, there holds
\begin{align}\label{eq:GNZ}
    \int_{\Gamma} \sum_{x\in \eta }G(x,\eta)\dP(\eta) = 
    \int_{\R^d}\int_{\Gamma}
    G(x,\{x\}\cup\eta){e^{\mu-W(\{x\}\mid \eta)}}\dP(\eta)\dx.
\end{align}
% \begin{align*}
%     \int_{\Gamma} G(\eta)\dP(\eta) = 
%     \int_{\Gamma_{\Lambda^c}}\sum_{n=0}^\infty\frac{e^{n\mu}}{n!}\int_{\Lambda^n} 
%     G(\{\xx_n\}\cup\eta){e^{-H(\{\xx_n\})-W(\{\xx_n\}\mid \eta)}}\dxx_n\dP(\eta)
% \end{align*}
% where $\Lambda^c=\R^d\backslash\Lambda$, $\Gamma_{\Lambda^c} = \{\eta\in\Gamma \mid \eta \subset \Lambda^c\}$ and 
Here, $W(\{x\}\mid \eta)$ is the extra energy cost associated with adding the point $x$ to the configuration $\eta$. This interaction energy is defined for finite configurations via the relation
\begin{align*}
    H(\eta'\cup \eta)=
    H(\eta')+W(\eta'\mid \eta)+H(\eta), \qquad \eta,\eta'\in \Gamma_0 
\end{align*}
and it can be extended to infinite configurations in a reasonable way, see \cite{Ruelle69} and \cite{Vasseur20}. However, the specific definition of $W$ is not important for this work, and thus details are left out.
%%:= $\lim_{\Delta \nearrow \R^d}W(\{\xx_n\}\mid \eta\cap \Delta)$
%%and where $\eta_{\Lambda^c}=\eta\cap\Lambda^c$
%\begin{align*}
%    W_\Delta(\{\xx_n\}\mid \eta):=
%    H(\{\xx_n\}\cup (\eta\cap\Delta))
%    -H(\{\xx_n\})-H(\eta\cap \Delta), 
%    \qquad \xx_n\in \Lambda^n, \,\eta\in\Gamma_{\Lambda^c}
%\end{align*}
%with the convention that we set $W_\Delta(\{\xx_n\}\mid \eta)=+\infty $ if $H(\{\xx_n\}\cup (\eta\cap\Delta))=+\infty$. We make the assumption on $H$ that the limit $\lim_{\Delta\nearrow \R^d}W_\Delta(\{\xx_n\}\mid \eta)$ exists for every $\eta \in \Gamma_{\Lambda^c}$ (here $\Delta\nearrow \R^d$ means that we take an increasing sequence of bounded sets $\Delta$ such that eventually every bounded set of $\R^d$ is contained in them) and define $W(\{\xx_n\}\mid \eta)=\lim_{\Delta\nearrow \R^d}W_\Delta(\{\xx_n\}\mid \eta)$.}
%is the interaction between the part of the particles inside $\Lambda$ and those on the outside,  We write $\PP$ is a $(\mu,H)$-Gibbs measure.
%By abuse of notation we identify $\xx_n$ as the element $\{x_1,\dots,x_n\}\in\Gamma$, where ignore the cases that two $x_i$ and $x_j$ coincide as this is a submanifold of $(\R^d)^n$ of Lebesgue measure zero.
It is also assumed that the Hamiltonian $H$ is \emph{translation-invariant}. The last assumption made on $H$ is \emph{stability},
i.e., there is a constant $B>0$ such that 
\begin{align*}
    H(\xx_n) \geq -Bn,
\end{align*}
where $\xx_n:=\{x_1,\ldots,x_n\}\subset\mathbb R^d$.
Under some additional technical assumptions, as alluded previously, there exists at least one translation-invariant $(\mu,H)$-Gibbs measure $\PP$ associated with these parameters and for the rest of this section we work in this framework.
From the GNZ-equation \req{GNZ} it follows that the so-called \emph{correlation functions} $(\rho^{(n)})_{n\geq 1}$ of such a Gibbs measure $\PP$ exist and are given by
\begin{align}\label{eq:CorrGibbs}
    \rho^{(n)}(\xx_n) =  \int_{\Gamma} \exp\Big(n\mu-H(\{\xx_n\})-W(\{\xx_n\}\mid \eta)\Big) \dP(\eta).
\end{align}
Note that when $\PP$ is translationally invariant, the density of $\PP$ is constant, i.e.,~$\rho^{(1)}\equiv \rho$. If these correlation functions satisfy that there is a $\xi>0 $ such that 
\begin{align}\label{eq:Rbound}
    0 \leq \rho^{(n)}(\xx_n) \leq \xi^n  \quad\text{ for all } \xx_n \subset \R^{d}, \quad n\geq 1 \tag{R-bound},
\end{align}
it is said that $\PP$ satisfies a \emph{Ruelle bound} and this will be our key assumption. Furthermore, in this case, the finite volume densities of $\PP$, also called \emph{Janossy densities}, for some bounded set $\Lambda\subset \R^d$ can be written in terms of the correlation functions as
\begin{align}\label{eq:Janossycorr}
     j_\Lambda^{(n)}(\xx_n) = \sum_{k=0}^\infty\frac{(-1)^k}{k!} \int_{\Lambda^k}\rho^{(n+k)}(\xx_n,\yy_k)\dyy_k,
\end{align}
see, e.g. \cite{jansen18} (with the convention that $\rho^{(0)}:=1$). More explicitly, plugging \req{CorrGibbs} into \req{Janossycorr} shows that 
\begin{align*}
    j_\Lambda^{(n)}(\xx_n) =  \int_{\Gamma_{\Lambda^c}} \exp\Big(n\mu-H(\xx_n)-W(\xx_n\mid \eta)\Big) \dP(\eta),
\end{align*}
where $\Gamma_{\Lambda^c} = \{\eta\in\Gamma \mid \eta \subset \Lambda^c\}$. In other words, there holds
\begin{align}\label{eq:JanossyGibbs}
    \int_{\Gamma_{\Lambda^c}} \exp\Big(n\mu-H(\xx_n)-W(\xx_n\mid \eta)\Big) \dP(\eta)
    &=j_\Lambda^{(n)}(\xx_n) 
    \\ & =\sum_{k=0}^\infty\frac{(-1)^k}{k!} \int_{\Lambda^k}\rho^{(n+k)}(\xx_n,\yy_k)\dyy_k.
    \nonumber 
\end{align}
\\
The idea of this work is to write the Hamiltonian $H$ as a power series of
the correlation functions $(\rho^{(m)})_{m\geq 1}$.
To do so, we take the logarithm of the above equation and for $n=2$ we obtain:
\begin{align}\label{eq:j2hamiltonian}
    \log j_\Lambda^{(2)}(\xx_2)= 2\mu-H(\xx_2) + \log \int_{\Gamma_{\Lambda^c}} \exp\Big(-W(\xx_2\mid \eta)\Big) \dP(\eta). 
\end{align}
Observe that for finite range Hamiltonians, by choosing $\Lambda$ large enough, the contribution $W(\xx_2\mid \eta)$ is zero. For general Hamiltonians one may reasonably expect that:
\begin{align*}
    \log \int_{\Gamma_{\Lambda^c}} \exp\Big(-W(\xx_2\mid \eta)\Big) \dP(\eta)
    =
    \log \int_{\Gamma_{\Lambda^c}}  \dP(\eta)+
    \varepsilon(\Lambda),
\end{align*}
where $\varepsilon(\Lambda)\to 0 $ as $\Lambda \nearrow\R^d$.
Note that the first term in the right hand side is equal to $\log j_\Lambda^{(0)}$ and
hence, $\log j_\Lambda^{(2)}(\xx_2)-\log j_\Lambda^{(0)} \to 2\mu-H(\xx_2)$ as $\Lambda\nearrow\R^d$. Thus, in order to express the Hamiltonian as a power series in $(\rho^{(m)})_{m\geq 1}$ we need to expand the $\log j_\Lambda^{(2)}(\xx_2)$.

\subsection{Ruelle-calculus}
As mentioned in the previous section, the goal is to write the Janossy densities as an exponential. This fits nicely in the framework of \emph{Ruelle calculus} introduced in \cite{Ruelle69}, see also \cite{KunaPhD}: for any family of symmetric functions $(F_n)_{n\geq 0}$ with $F_0\in\R$ and $F_n\colon (\R^d)^n\to \R$ we can associate a function $\boldsymbol{F}\colon \Gamma_0\to \R$ by $\boldsymbol{F}(\{x_1,\dots,x_n\}):=F^{(n)}(x_1,\ldots,x_n)$ and write $\boldsymbol{F}=(F^{(n)})_{n\geq 0}$. We will also write $F^{(n)}(\xx_n)$ if there is no risk of confusion. We introduce the set $\mathcal{A}$ of all measurable functions on $\Gamma_0$ and the subset
\begin{align}
    \mathcal{A}_0:= \{\Phi\in\mathcal{A}\mid \Phi(\emptyset)=0 \}.
\end{align}
We introduce the operation $\ast\colon \A\times \A \to \A$ by
\begin{align}
    (\bP\ast \boldsymbol{\Phi})(\eta) := \sum_{\gamma\subset\eta} \bP(\gamma)\boldsymbol{\Phi}(\eta\backslash\gamma).
\end{align}
It is well-known that $\A$ together with $\ast$ forms a commutative algebra with unit element $\mathds{1}\in\A$ given by
\begin{align*}
    \mathds{1}(\eta) = \begin{dcases}
        1 ,\qquad \eta = \emptyset \\
        0, \qquad  \eta\neq \emptyset.
    \end{dcases}
\end{align*}
We also introduce the mapping $ \fexp\colon \A_0\to \mathds{1}+ \A_0$ by
\begin{align*}
    \fexp \bP:=\mathds{1}+ \sum_{n=1}^\infty \frac{1}{n!}\bP^{\ast n},
\end{align*}
where $\bP^{\ast 1}=\bP$ and
\begin{align*}
    \bP^{\ast n}(\eta) = 
    (\bP\ast\bP^{\ast (n-1)})(\eta).
\end{align*}
In particular, there holds $(\fexp \bP)^{(1)}(x)=\bP^{(1)}(x)$ and for $n\geq 2$
\begin{align*}
    (\fexp \bP)^{(n)}(\xx_n) = \sum_{k=1}^{n } \frac{1}{k!}\sum_{\pi \in\Pi_k(n)}\prod_{i=1}^k \bP^{( \kappa_i)}(\xx_{\pi_i}),
\end{align*}
where $\Pi_k({n})$ is the set of partitions of $\{1,\dots ,n\}$ into $k$ sets $(\pi_i)_{1\leq i\leq k}$, $\kappa_i$ is the number of elements of $\pi_i$ and $\xx_{\pi_i}=(x_{\pi_i(1)},\dots, x_{\pi_i(\kappa_i)})$.
The mapping $\fexp$ satisfies the exponent rules, i.e.,
\begin{align}\label{eq:powerrule}
    \fexp \bP \ast \fexp \boldsymbol{\Phi}
    =\fexp (\bP+\boldsymbol{\Phi}).
\end{align}
\begin{example}
The well-known \emph{truncated correlation functions} are defined by the recursion $\rho_T^{(1)}= \rho^{(1)}=\rho$ and for $n\geq 2$ 
\begin{align}\label{eq:clusterfunctions}
    \rho^{(n)}_T(\xx_n)=\rho^{(n)}(\xx_n)- \sum_{k=2}^n \frac{1}{k!}\sum_{\pi \in\Pi_k({n})}\prod_{i=1}^k \rho_T^{(\kappa_i)}(\xx_{\pi_i}),%(-1)^{|\pi_i|}
\end{align}
where $\Pi_k({n})$ is as above. With the conventions $\rho^{(0)}=1$ and $\rho^{(0)}_T=0$. Using Ruelle's formulation we have that $\boldsymbol{\rho}=(\rho^{(n)})_{n\geq 0}\in\A$ and $\boldsymbol{\rho}_T=(\rho^{(n)}_T)_{n\geq 0} \in \A_0$. Thus, $\bro = \fexp \boldsymbol{\rho}_T$.
\end{example}

It is easy to show that for functions related by $\bP=\fexp\bF$ the following holds (see, e.g.~\cite{Rebenko14}):

\medskip
\begin{lemma}[Exponential representation]\label{thm:reb}
Suppose $\boldsymbol{\Phi}=(\Phi^{(n)})_{n\geq 0}\in \A_0 $ such that for all $n\geq 0$ and $\Lambda\subset\R^d$ bounded there holds 
\begin{align}\label{eq:Fbound}
    \int_{\Lambda^n}|\Phi^{(n+1)}(x,\xx_n)|\dxx_n \leq %|\Lambda| 
    n! C c^n,
\end{align}
for some $C>0$ and  $0<c<1/2$. Then for $\bP = \fexp \boldsymbol{\Phi}$
% Then the function $\Phi:\Gamma_0\to \R$ defined by
% \begin{align}\label{eq:partitionrep}
%     \Phi(\eta) := \sum_{k=1}^{ N(\eta) } \sum_{\pi \in\Pi_k(\eta)}\prod_{i=1}^k F_{\kappa_i}({\pi_i})
% \end{align}
% satisfies 
we have
\begin{align}\label{eq:exprep}
    \sum_{n=0}^\infty \frac{1}{n!} \int_{\Lambda^n} \Psi^{(n)}(\xx_n)\dxx_n = \exp\left(\sum_{n=1}^\infty \frac{1}{n!} \int_{\Lambda^n}\Phi^{(n)}(\xx_n) \dxx_n\right).
\end{align}
% for $0<c<1/2$.
\end{lemma}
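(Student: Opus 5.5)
The plan is to expand the left-hand side using the partition formula for $(\fexp\boldsymbol{\Phi})^{(n)}$ and then show that the resulting double series may be reordered into the exponential series. Formally the identity is the statement that integration over $\Lambda$ with weight $1/n!$ is a homomorphism from $(\A,\ast)$ to $(\R,\cdot)$, restricted to integrable functions. So the only real content is justifying absolute convergence, and this is where the bound \eqref{eq:Fbound} with $c<1/2$ enters.

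\textbf{Step 1 (algebraic identity).} For $n\ge1$ I will write
\begin{align*}
\frac{1}{n!}\int_{\Lambda^n}\Psi^{(n)}(\xx_n)\dxx_n=\sum_{k=1}^n\frac{1}{k!}\,\frac{1}{n!}\sum_{\pi\in\Pi_k(n)}\prod_{i=1}^k\int_{\Lambda^{\kappa_i}}\Phi^{(\kappa_i)}\,\diff\xx_{\pi_i},
\end{align*}
using Fubini on each product term. The number of partitions of $\{1,\dots,n\}$ into ordered blocks of sizes $(m_1,\dots,m_k)$ is $n!/(m_1!\cdots m_k!)$. Since $\Pi_k(n)$ consists of unordered partitions, I must handle the $k!$ orderings: summing over ordered $k$-tuples of sizes counts each unordered partition $k!$ times, which I will combine carefully with the prefactor $1/k!$ in $\fexp$. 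Carried out properly, this shows the $n$-th term equals the coefficient extraction
\begin{align*}
\sum_{k}\frac{1}{k!}\sum_{m_1+\dots+m_k=n,\ m_i\ge1}\prod_{i=1}^k \frac{1}{m_i!}\int_{\Lambda^{m_i}}\Phi^{(m_i)}.
\end{align*}
Summing over $n$ and then exchanging the sums over $n$ and $k$ gives $\sum_k \frac1{k!}\bigl(\sum_m \frac1{m!}\int\Phi^{(m)}\bigr)^k$, which is the right-hand side. The $n=0$ term is $\Psi^{(0)}=1$ and matches the $k=0$ term.

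\textbf{Step 2 (absolute convergence).} To justify the exchange I need
\[
\sum_m \frac{1}{m!}\int_{\Lambda^m}|\Phi^{(m)}|<\infty .
\]
From \eqref{eq:Fbound} with $n=m-1$, integrating the remaining variable $x$ over $\Lambda$, I get
\[
\frac{1}{m!}\int_{\Lambda^m}|\Phi^{(m)}|\le \frac{|\Lambda|\,C\,(m-1)!\,c^{m-1}}{m!}\le |\Lambda|\,C\,c^{m-1},
\]
which is summable because $c<1$. Call the sum $S$. The same computation with absolute values shows that the fully expanded double series is dominated by $\sum_k S^k/k!=e^S<\infty$. Tonelli/Fubini then justifies both the interchange of the integral with the partition sum and the rearrangement of the $n$- and $k$-sums. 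It also shows that $\sum_n\frac1{n!}\int_{\Lambda^n}|\Psi^{(n)}|<\infty$, so the left-hand side is well defined.

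\textbf{Main obstacle.} I expect the delicate part to be the combinatorial bookkeeping in Step 1: the $k!$ between ordered and unordered partitions, and the multinomial count. Once that is settled, convergence is straightforward, and $c<1$ already suffices. I suspect $c<1/2$ is needed only for a pointwise or uniform variant, for instance a bound of the type
\[
\int_{\Lambda^n}|\Psi^{(n+1)}(x,\xx_n)|\,\dxx_n\le n!\,C'(2c)^n ,
\]
obtained by the standard tree-type recursion. I would note this but not require it for \eqref{eq:exprep}.
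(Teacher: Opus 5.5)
Your proof is correct. The paper does not prove this lemma itself; it only says the identity is easy and refers to Rebenko. Your argument is the standard proof behind that reference. Integration over $\Lambda^n$ with weight $\frac{1}{n!}$ turns $\ast$ into multiplication of numbers. The rearrangements are justified because the fully expanded series is dominated by $e^{S}$, where $S=\sum_{m\ge 1}\frac{1}{m!}\int_{\Lambda^m}|\Phi^{(m)}|$.

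\textbf{Bookkeeping in Step 1.} Your narrative there is slightly inconsistent, although your displayed coefficient is right. In the paper's partition formula for $(\fexp\bPH)^{(n)}$, the prefactor $\frac{1}{k!}$ goes together with $\Pi_k(n)$ read as \emph{ordered} $k$-tuples of nonempty blocks; that is exactly what $\frac{1}{k!}\bPH^{\ast k}$ produces. If you read $\Pi_k(n)$ as unordered and also keep the $\frac{1}{k!}$, you end up with $\frac{1}{(k!)^2}$. The clean way is to work directly from the definition
\[
\frac{1}{k!}\bPH^{\ast k}(\eta)=\frac{1}{k!}\sum_{(\gamma_1,\dots,\gamma_k)}\prod_{i=1}^k\bPH(\gamma_i),
\]
where the sum runs over ordered tuples of disjoint nonempty sets covering $\eta$. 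Then the multinomial count $n!/(m_1!\cdots m_k!)$ applies with no further correction and gives your formula.

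\textbf{The hypothesis.} Your remark is also right: the argument only uses $S<\infty$, so $c<1$ suffices. In fact a bound on the full integrals, $\int_{\Lambda^m}|\Phi^{(m)}|\le m!\,C c^m$ with $c<1$, would already do. That is precisely the form of the bounds \eqref{eq:omega1_bound} and \eqref{eq:rhodtildebound} that are available when the lemma is later applied to $\BF$. So your route also shows that the lemma holds under the hypothesis the paper actually verifies there.
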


From this theorem, supposing that the truncated correlation functions $(\rho_T^{(n)})_{n\geq 1}$ satisfy a bound of the type \req{Fbound}, then from \req{Janossycorr} for $n=0$ it follows that 
\begin{align}\label{eq:j0rhot}
    j^{(0)}_\Lambda = \exp \left(
    \sum_{k=1}^\infty\frac{(-1)^k}{k!} \int_{\Lambda^k}\rho^{(k)}_T(\yy_k)\dyy_k    
    \right),
\end{align}
where the series is absolutely convergent.
Under assumption \eqref{eq:Rbound}, a bound of the type \req{Fbound} holds, see \cite{Ruelle69}.
Lastly, for $\gamma\in \Gamma_0$ we introduce the operator $D_\gamma\colon \A\to \A$ by 
\begin{align*}
    (D_\gamma \bP) (\eta) :=\begin{dcases}
        \bP(\eta\cup\gamma), \qquad \eta\cap\gamma =\emptyset;\\
        0 ,\qquad \qquad\qquad \text{ else.}
    \end{dcases}
\end{align*}
Note that we have $D_{\gamma}=D_{\{x\}}D_{\gamma\backslash\{x\}}$ and it holds, cf.~\cite{KunaPhD},
\begin{align}\label{eq:prodrule}
    D_{\{x \}}(\bP\ast\bPH) =
    (D_{\{ x \} }\bP)\ast\bPH+\bP\ast (D_{\{ x \} }\bPH)
\end{align}
and
\begin{align}\label{eq:chainrule}
    D_{\{x\}}\left(\fexp \bP\right) = \fexp\bP\ast D_{\{x\}}\bP.
\end{align}
In light of Lemma~\ref{thm:reb}, we want to write 
% \begin{align*}
%     j^{(2)}_\Lambda(\xx_2)= \sum_{k=0}^\infty\frac{(-1)^k}{k!} \int_{\Lambda^k}\rho^{(2+k)}(\xx_2,\yy_k)\dyy_k =  \rho^{(2)}(\xx_2)\left(1+\sum_{k=1}^\infty\frac{(-1)^k}{k!} \int_{\Lambda^k}\frac{\rho^{(2+k)}(\xx_2,\yy_k)}{\rho^{(2)}(\xx_2)}\dyy_k\right),
% \end{align*}
% and we thus (formally) have
\begin{align}\label{eq:j2exp}
    j^{(2)}_\Lambda(\xx_2)=\rho^{(2)}(\xx_2) \exp\left(\sum_{k=1}^\infty\frac{(-1)^k}{k!} \int_{\Lambda^k}F_{2}^{(k)}(\xx_2;\yy_k)\dyy_k\right),
\end{align}
where the family $\boldsymbol{F}_{\{x_1,x_2\}}=(F_{2}^{(k)}(\xx_2;\cdot))_{k\geq1}$ is recursively defined by
\begin{align}\label{eq:fnk}
    D_{\{x_1,x_2\}}\bro
    = \rho^{(2)}(\xx_2)
    \fexp \boldsymbol{F}_{\{x_1,x_2\}}, \quad \xx_2=\{x_1,x_2\}.
\end{align}
Remark~\ref{remref} shows in a similar situation how to reconsider an equation like \eqref{eq:fnk} as a recursive definition of $(F_{2}^{(k)}(\xx_2;\cdot))_{k\geq1}$. 
To obtain \eqref{eq:j2exp} from \eqref{eq:fnk} we will show below bounds on the integrals in the exponential on the right-hand side of \req{j2exp}.
% \begin{align}\label{eq:fnk}
%     F_{2,k}(\xx_2,\yy_k)=\frac{\rho^{(2+k)}(\xx_2,\yy_{k })}{\rho^{(2)}(\xx_2)} - \sum_{l=2}^{k }\sum_{\pi\in\Pi_l(\{\yy_k\})}\prod_{i=1}^l F_{2, |\pi_i|}(\xx_2, {\pi_i})
% \end{align}
% where $\Pi_l(\{\yy_k\})$ is the set of partitions of $\{y_1,\dots,y_k\}$ into $l$ nonempty sets.

\section{Main result}\label{sec3}

To state the main results, three assumptions are introduced.
Let $\PP$ be a $(\mu,H)$-Gibbs measure for a Hamiltonian $H$.

\begin{assumption}\label{ass:A}
We assume for any bounded set $\Delta\subset \R^d$ there holds
\begin{align*}
    \lim_{\Lambda\nearrow \R^d}\sup_{\xx_n \in \Delta^n}\left|
    \frac{j^{(n)}_\Lambda(\xx_n)}{j^{(0)}_\Lambda}
     -
     e^{n\mu-H(\xx_n)}
     \right| = 0,
\end{align*}
for $n=1,2$.
% \begin{align*} 
%        \lim_{\Lambda\nearrow \R^d}\sup_{\xx_2 \in \Delta^2}\left|\log \int_{\Gamma_{\Lambda^c}} \exp\Big(-W(\{\xx_2\}\mid \eta)\Big) \dP(\eta)
%     -
%     \log \int_{\Gamma_{\Lambda^c}}  \dP(\eta)\right| = 0.
% \end{align*}
\end{assumption}

\begin{assumption}\label{ass:B}
There are constants $M,A,D_\rho>0$ such that for any $\Lambda\subset\R^d$ any $k\geq 1$ and any $\xx_m \in \R^d$ we have
\begin{align}\label{eq:cbrho1}
    % \frac{1}{\rho^m}
    \int_{\Lambda^k}\left|\rhot{m+k} (\xx_m,\yy_k)\right|\dyy_k \leq  
    (m+k-1)!M A^m D_\rho^k \, \prod_{i=1}^m\rho^{(1)}(x_i)
\end{align}
for $m=1,2$.
% for every $n\leq m$.
% and
% \begin{align}\label{eq:cbrho2}
%     \sup_{\xx_2 \in (\R^d)^2}\frac{1}{\rho^2}\int_{\Lambda^k}\left|\rho_T^{(2+k)} (\xx_2,\yy_k)\right|\dyy_k \leq (k+1)!C D_\rho^k .
% \end{align}
\end{assumption}

\begin{remark}
Point processes satisfying Assumption \ref{ass:B} are called  \emph{strongly Brillinger mixing}, see e.g. \cite{Heinrich16}.
\end{remark}

\begin{assumption}\label{ass:C}
Assume that $\rho^{(1)}(x_1) >0 $ for all $x_1 \in \mathbb{R}$ and that there is $r >0$ and a $d(r)$ such that 
\begin{align}\label{eq:rho2lowerbound}
    {\rho^{(1)}(x_1)\rho^{(1)}(x_2)} \leq d(r)\rho^{(2)}(\xx_2)
    \qquad \text{ for all } |x_1-x_2|\geq r.
\end{align}
\end{assumption}

This allows to treat situations where the underlying system is hard-core and hence $\rho^{(2)}(\xx_2)$ is zero for $x_1$ and $x_2$ to near. 

% \innerblock{Non-negative multi-body interaction of finite range}

% \begin{assumption}\label{ass:A}
% The Janossy densities of $\PP$ satisfy
% \begin{align}\label{eq:A}
%     \log j_\Lambda^{(1)}(x)-\log j_\Lambda^{(0)} \to \mu 
% \end{align}
% as $\Lambda\nearrow\R^d$. 
% \end{assumption}
% \begin{assumption}\label{ass:B}
% There are constants $D>0$ and $q>0$ such that the truncated correlation functions of $\PP$ satisfy 
% \begin{align}\label{eq:Rboundcluster}
%     \sup_{x\in\R^d}\int_{\Lambda^n}  \left|{\rho}_{T}^{(1+n)}(x,\yy_n) \right|\dyy_n
%     \leq n! D q^{n}, 
% \end{align}
% for every bounded set $\Lambda\subset \R^d$.     
% \end{assumption}

Now we state the main result of this paper:

\begin{theorem}\label{thm:mainthm}
Let $\PP$ be a (translation invariant) $(\mu,H)$-Gibbs measure that satisfies a Ruelle-condition and Assumptions \ref{ass:A}, \ref{ass:B} and \ref{ass:C}.
Then for any $r>0$
% If additionally there is a decreasing function $d\colon (0,\infty)\to  (0,\infty)$ such that 
% \begin{align}\label{eq:rho2lowerbound}
%     \frac{\rho^2}{\rho^{(2)}(\xx_2)} \leq d(r) \qquad \text{ for all } |x_1-x_2|\geq r
% \end{align}
% \begin{enumerate}%[label={\arabic*.}, ref=\arabic*]
%     \item[(A)] The Janossy densities of $\PP$ satisfy 
% \begin{align}
%     \log j_\Lambda^{(1)}(x)-\log j_\Lambda^{(0)} \to \mu
% \end{align}
% as $\Lambda\nearrow\R^d$. \label{itm:A}
% \item[(B)] The truncated correlation functions of $\PP$ satisfy 
% \begin{align}\label{eq:Rboundcluster}
%     \int_{\Lambda^n}  \left|{\rho}_{T}^{(1+n)}(x,\yy_n) \right|\dyy_n
%     \leq n! D q^n, 
% \end{align}
% for some $D>0$, $0<q$ and every bounded set $\Lambda\subset \R^d$, uniformly in $x\in \R^d$. \label{itm:B}
% \end{enumerate}
there exists a sufficiently small $D_\rho$, % < q_0$, where 
% \begin{align}\label{eq:qzero}
%     q_0 = \frac{1}{2(2+\zeta D)},
%     \qquad\zeta = \frac{1}{2\log 2 -1}
% \end{align}
such that
\begin{align}\label{eq:muexpansionfinal}
    \mu = \log \rho  +  
   \sum_{k=1}^\infty\frac{(-1)^k}{k!} \int_{(\R^d)^k}
    \fod{k}(0,\yy_k) \dyy_k
\end{align}
and 
\begin{align}\label{eq:uexpansionfinal}
    H(\xx_2)  = -\log \left(
    \frac{\rho^{(2)}(\xx_2)}{\rho^2}\right)
    -
    \sum_{k=1}^\infty\frac{(-1)^k}{k!} \int_{(\R^d)^k}
    \secod{k}(\xx_2,\yy_k) \dyy_k,\quad 
\end{align}
for all  $\xx_2=\{x_1,x_2\}$ such that $ |x_1-x_2|\geq r$,
where the families $\bom_x=(\fod{k}(x;\cdot))_{k\geq 1}$ and $\bom_{\{ x_1,x_2\}}=(\secod{k}(x_1,x_2;\cdot))_{k\geq 1}$ are recursively defined such that
\begin{align}\label{eq:rhotilde_1}
    \frac{D_{\{x\}}\bro_T}{\rho^{(1)}(x)} = \fexp \bom_x
\end{align}
and 
\begin{align}\label{eq:rhodoubletilde}
    % \frac{D_{\{x_1,x_2\}}\bro_T}{\rho^{(2)}(\xx_2)}
    % = \left(\bom_{(x_1,x_2)} -\frac{\rho^{(1)}(x_1)\rho^{(1)}(x_2)}{\rho^{(2)}(\xx_2)}\right)\ast \frac{D_{\{x_1\}}\bro_T}{\rho^{(1)}(x_1)}
    % \ast \frac{D_{\{x_2\}}\bro_T}{\rho^{(1)}(x_2)}
    \frac{D_{\{x_1,x_2\}}\bro_T}{\rho^{(2)}(\xx_2)}
    = \left(\fexp\bom_{(x_1,x_2)} -\frac{\rho^{(1)}(x_1)\rho^{(1)}(x_2)}{\rho^{(2)}(\xx_2)} \mathds{1} \right)\ast \frac{D_{\{x_1\}}\bro_T}{\rho^{(1)} (x_1) }
    \ast \frac{D_{\{x_2\}}\bro_T}{\rho^{(1)}(x_2 )}.
\end{align}
\end{theorem}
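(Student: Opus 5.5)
Starting from the Janossy representation \eqref{eq:Janossycorr}, I will express $j^{(n)}_\Lambda$ for $n=0,1,2$ through the Ruelle calculus. Writing $\bro=\fexp\bro_T$, the chain rule \eqref{eq:chainrule} gives $D_{\{x\}}\bro=\bro\ast D_{\{x\}}\bro_T$. Iterating with the product rule \eqref{eq:prodrule} gives
\[
D_{\{x_1,x_2\}}\bro=\bro\ast\bigl(D_{\{x_1,x_2\}}\bro_T+D_{\{x_1\}}\bro_T\ast D_{\{x_2\}}\bro_T\bigr).
\]
Substituting \eqref{eq:rhotilde_1} and \eqref{eq:rhodoubletilde} turns this into
\[
D_{\{x\}}\bro=\rho^{(1)}(x)\,\bro\ast\fexp\bom_x,
\]
and, after the term $\rho^{(1)}(x_1)\rho^{(1)}(x_2)\mathds{1}$ cancels against $D_{\{x_1\}}\bro_T\ast D_{\{x_2\}}\bro_T$, into
\[
D_{\{x_1,x_2\}}\bro=\rho^{(2)}(\xx_2)\,\bro\ast\fexp\bom_{\{x_1,x_2\}}\ast\fexp\bom_{x_1}\ast\fexp\bom_{x_2}.
\]
The power rule \eqref{eq:powerrule} then collapses the right-hand side into a single $\fexp$ of a sum. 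Note that $j^{(n)}_\Lambda(\xx_n)=\sum_k\frac{(-1)^k}{k!}\int_{\Lambda^k}(D_{\xx_n}\bro)(\yy_k)\dyy_k$. Applying Lemma~\ref{thm:reb} with the sign change $y\mapsto$ weight $-1$ (i.e.\ to $(-1)^{n}\Phi^{(n)}$) gives
\begin{align*}
\frac{j^{(2)}_\Lambda(\xx_2)}{j^{(0)}_\Lambda}=\rho^{(2)}(\xx_2)\exp\Big(\sum_{k\ge1}\frac{(-1)^k}{k!}\int_{\Lambda^k}\big(\secod{k}(\xx_2,\yy_k)+\fod{k}(x_1,\yy_k)+\fod{k}(x_2,\yy_k)\big)\dyy_k\Big),
\end{align*}
and similarly $j^{(1)}_\Lambda/j^{(0)}_\Lambda=\rho\exp(\sum_k\frac{(-1)^k}{k!}\int_{\Lambda^k}\fod{k}(x,\yy_k)\dyy_k)$.

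Next I take logarithms and let $\Lambda\nearrow\R^d$ using Assumption~\ref{ass:A}. This is legitimate because the ratios converge uniformly on bounded sets to $e^{\mu}$ and $e^{2\mu-H(\xx_2)}$, which are positive where $H<\infty$. The limit for $n=1$ gives \eqref{eq:muexpansionfinal}, using translation invariance to replace $x$ by $0$. Subtracting twice this from the $n=2$ identity cancels the $\fod{}$ contributions and the $2\log\rho$ against $2\mu$, which yields \eqref{eq:uexpansionfinal}. The limit passage inside the series needs absolute convergence of $\sum_k\frac1{k!}\int_{(\R^d)^k}|\omega^{(\cdot)}|$ uniformly in $\Lambda$, together with dominated convergence in $\Lambda$. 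Both follow once we have bounds of the form $\int_{\Lambda^k}|\omega^{(1+k)}_1|\le k!\,C c^k$ with $c<1/2$, which is also exactly the hypothesis of Lemma~\ref{thm:reb}.

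The main obstacle is therefore the combinatorial estimate: proving \eqref{eq:Fbound}-type bounds for $\bom_x$ and $\bom_{\{x_1,x_2\}}$ from Assumption~\ref{ass:B}, Assumption~\ref{ass:C} and small $D_\rho$.

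\textbf{Bound for $\bom_x$.} Formally $\bom_x=\log^\ast(D_{\{x\}}\bro_T/\rho^{(1)}(x))$. I write $D_{\{x\}}\bro_T/\rho^{(1)}(x)=\mathds{1}+\boldsymbol{G}_x$, where $\int_{\Lambda^k}|G^{(k)}_x|\le k!MD_\rho^k$ by \eqref{eq:cbrho1} with $m=1$. The recursion from Remark~\ref{remref} is
\[
\omega^{(k)}=G^{(k)}-\sum_{l\ge2}\frac1{l!}\sum_{\pi\in\Pi_l(k)}\prod_i\omega^{(\kappa_i)}.
\]
By induction I aim for $\int|\omega^{(k)}|\le k!\,a\,b^k$. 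Summing over partitions yields a generating-function inequality of the form $f(z)\le g(z)+(e^{f(z)}-1-f(z))$ in the variable $z\sim D_\rho$, with $g(z)=M\sum_k(D_\rho z)^k$. Alternatively one can use the explicit series $\log^\ast(\mathds{1}+\boldsymbol G)=\sum_n\frac{(-1)^{n+1}}{n}\boldsymbol G^{\ast n}$, each term bounded via compositions. Either way this converges for $D_\rho$ small depending on $M$, which is where the constant $\zeta=1/(2\log2-1)$ type threshold enters.

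\textbf{Bound for $\bom_{\{x_1,x_2\}}$.} I solve \eqref{eq:rhodoubletilde} by $\ast$-multiplying with $\fexp(-\bom_{x_1})\ast\fexp(-\bom_{x_2})$. This gives
\[
\fexp\bom_{\{x_1,x_2\}}=\frac{\rho^{(1)}(x_1)\rho^{(1)}(x_2)}{\rho^{(2)}}\mathds{1}+\frac{D_{\{x_1,x_2\}}\bro_T}{\rho^{(2)}}\ast\fexp(-\bom_{x_1}-\bom_{x_2}).
\]
The zero-point value is $1$, since $\rho^{(2)}=\rho^{(1)}\rho^{(1)}+\rho_T^{(2)}$. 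The non-constant part is bounded using \eqref{eq:cbrho1} with $m=2$ and the factor $\rho^{(1)}\rho^{(1)}/\rho^{(2)}\le d(r)$ from Assumption~\ref{ass:C}; this is why we need $|x_1-x_2|\ge r$. The $\ast$-product with $\fexp(-\bom_{x_i})$ stays of the form $k!\,C'(c')^k$ by the previous step, with a binomial convolution. Taking $\log^\ast$ as before gives the needed bound for $D_\rho$ small depending on $M$, $A$ and $d(r)$. I expect most effort in tracking the factorial growth $(m+k-1)!$ in \eqref{eq:cbrho1} through these convolutions without losing the $k!$ scaling.
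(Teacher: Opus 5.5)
Your proposal is correct, and its overall structure matches the paper's. The identity $D_{\{x_1,x_2\}}\bro=\rho^{(2)}(\xx_2)\fexp(\bro_T+\bom_{x_1}+\bom_{x_2}+\bom_{\{x_1,x_2\}})$ is Lemma~\ref{lemma:split}, proved by the same product/chain-rule computation. Both arguments then apply Lemma~\ref{thm:reb} and pass to the limit using Assumption~\ref{ass:A}. Deriving \eqref{eq:muexpansionfinal} from the case $n=1$ rather than citing \cite{Frommer24} is only cosmetic. Your recursive bound for $\bom_x$, which closes as long as the majorant of $\boldsymbol G_x$ stays below $2\log 2-1$, is the argument behind \eqref{eq:omega1_bound}.

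The genuine difference lies in the key estimate, Proposition~\ref{prop:bound}.

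The paper keeps $\bom_{\{x_1,x_2\}}$ implicit. It bounds the recursion \eqref{eq:rhodoubletildeALT} term by term via Bell polynomials, which gives the functional inequality $2E_w\le E_c+(c_0-1)E_a^2+e^{E_w}E_a^2-2c_0$ for the generating function of the majorants. It then controls this inequality with the Lambert $W$ function and a continuity argument in the truncation level $K$.

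You instead invert \eqref{eq:rhodoubletilde} in the $\ast$-algebra, check that $\fexp\bom_{\{x_1,x_2\}}(\emptyset)=1$, and take $\log^\ast$. This leaves only explicit compositions of majorant series, controlled by $\int_{\Lambda^k}|(\boldsymbol R\ast\boldsymbol S)^{(k)}|\le\sum_l\binom kl\int_{\Lambda^l}|R^{(l)}|\int_{\Lambda^{k-l}}|S^{(k-l)}|$.

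This closes even more cleanly than you indicate:
\begin{itemize}
\item $\fexp(-\bom_x)$ is the $\ast$-inverse of $\mathds 1+\boldsymbol G_x=D_{\{x\}}\bro_T/\rho^{(1)}(x)$, whose majorant is $(1-s)/(1-(1+MA)s)$ with $s=tD_\rho$.
\item Combined with $E_c=c_0/(1-s)^2$, the part of $\fexp\bom_{\{x_1,x_2\}}$ off $\emptyset$ is majorized by $c_0\bigl[(1-(1+MA)s)^{-2}-1\bigr]$.
\item $-\log(1-\cdot)$ of this majorant is analytic for $s<\bigl(1-\sqrt{c_0/(1+c_0)}\bigr)/(1+MA)$.
\end{itemize}
This yields \eqref{eq:rhodtildebound} with an explicit $\chi=\chi(M,A,d(r))$. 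There is no fixed-point problem, and the $\bom_x$ bound is not needed for this step (it is still needed for the $\mu$-expansion and the application of Lemma~\ref{thm:reb}).

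The paper's route works directly with the recursion that defines $\bom_{\{x_1,x_2\}}$ in Remark~\ref{remref}. Yours is shorter and gives a transparent radius of convergence.

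Two small slips:
\begin{itemize}
\item \eqref{eq:cbrho1} with $m=1$ gives $k!MAD_\rho^k$ for $\boldsymbol G_x$, not $k!MD_\rho^k$.
\item You need not assume $H(\xx_2)<\infty$ when taking logarithms. Indeed $j^{(2)}_\Lambda/j^{(0)}_\Lambda=\rho^{(2)}(\xx_2)e^{S_\Lambda}$ with $|S_\Lambda|$ bounded uniformly in $\Lambda$, and $\rho^{(2)}(\xx_2)\ge\rho^2/d(r)$ by Assumption~\ref{ass:C}. Hence the limit is strictly positive, and $H(\xx_2)<\infty$ follows.
\end{itemize}
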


\begin{remark}\label{remref}
We can rewrite \req{rhotilde_1} and \req{rhodoubletilde} in a recursive way as:
$\fod{1}(x;y) = {\rho}^{(2)}_T(x,y)/\rho^{(1)}(x)$ and $\secod{1}(\xx_2;y)=\rho^{(2+1)}_T(\xx_2,y)/\rho^{(2)}(\xx_2)$. By isolating the highest order term in \eqref{eq:rhotilde_1} for $k\geq 2$ we obtain: 
\begin{align}\label{eq:rhotilde_2}
    \fod{k}(x;\yy_k) =
    \frac{\rho^{(1+k)}_T(x,\yy_k)}{\rho^{(1)}(x)}
    -\sum_{l=2}^k \frac{1}{l!}\sum_{\pi\in\Pi_l(\{\yy_k\})} \prod_{i=1}^l\fod{\kappa_i}(x;\pi_i),
\end{align}
where $\kappa_i$ is the cardinality of $\pi_i$.
As $\kappa_i$ is less than $k$ this relation allows to define $\fod{k}(x;\cdot)$ recursively in terms of $\fod{l}(x;\cdot)$ with $l<k$. Along the same line we obtain from \eqref{eq:rhodoubletilde}
\begin{equation}\label{eq:rhodoubletildeALT}
\begin{split}
      \secod{k}(\xx_2;\yy_k)
      &=\frac{\rho^{(2+k)}_T(\xx_2,\yy_k)}{\rho^{(2)}(\xx_2)} 
     -\frac{\rho_T^{(2)}(\xx_2)}{\rho^{(2)}(\xx_2)} 
     \sum_{\sigma\in \Pi_{\leq 2}(\{\yy_k\})}
     \frac{\rho^{(1+\alpha_1)}_T(x_1,\sigma_1)}{\rho^{(1)}(x_1)} 
     \frac{\rho^{(1+\alpha_2)}_T(x_2,\sigma_2)}{\rho^{(1)}(x_2)}\\
     &-\sum_{\sigma\in \Pi_{\leq 3}(\{\yy_k\})\atop 0<\alpha_3<k}
     \frac{\rho^{(1+\alpha_1)}_T(x_1,\sigma_1)}{\rho^{(1)}(x_1)}
     \frac{\rho^{(1+\alpha_2)}_T(x_2,\sigma_2)}{\rho^{(1)}(x_2)}
     \sum_{l=1}^{\alpha_3} \frac{1}{l!}\sum_{\pi\in\Pi_l(\sigma_3)}
     \prod_{i=1}^l \secod{\kappa_i} (\xx_2;{\pi_i}) \\
     &- \sum_{l=2}^{k} \frac{1}{l!}\sum_{\pi\in\Pi_l(\yy_k)}
     \prod_{i=1}^l \secod{\kappa_i} (\xx_2;{\pi_i}),
\end{split}
\end{equation}
where $\Pi_{\leq m}(\{\yy_k\})$ is the set of ordered partitions $\pi=(\pi_i)_{i=1,\ldots,m}$ of $\{y_1,\dots,y_k\}$ into $m$ sets where at least one of the sets $\pi_1,\dots,\pi_m$ is non-empty and $\kappa_i$ and $\alpha_l$ 
are the sizes of the parts of the subsets $\pi_i$ and $\sigma_l$, respectively. In contrast to $\Pi_{\leq m}(\{\yy_k\})$, where it is assumed that all $\pi_i$ are non-empty.
\end{remark}

Recall that from \req{j2hamiltonian} and \req{j2exp} it holds
\begin{align*}
     & 2\mu-H(\xx_2) + \log \int_{\Gamma_{\Lambda^c}} \exp\Big(-W(\xx_2\mid \eta)\Big) \dP(\eta)
     \\ &= \log \rho^{(2)}(\xx_2) 
     +\sum_{k=1}^\infty
     \frac{(-1)^k}{k!} 
     \int_{\Lambda^k } F_2^{(k)}(\xx_k;\yy_k) \dyy_k.
\end{align*}
Thus, it is natural to write $( F_2^{(k)}(\xx_k;\cdot))_{k\geq 1}$ as follows:
\begin{lemma}\label{lemma:split}
For the family $(F_2^{(k)}(\xx_2; \cdot))_{k\geq 1}$ defined by \req{fnk}, it holds that
\begin{align}\label{eq:fnkdirect}
    F_{2}^{(k)}(\xx_2,\yy_{k})= \rhot{k}(\yy_k)
    +\fod{k}(x_1,\yy_k)+\fod{k}(x_2,\yy_k)
    +\secod{k}(\xx_2,\yy_k)
\end{align}
where  $(\fod{k}(x ; \cdot\} )_{k\geq 1}$ and $(\secod{k}( x_1, x_2 ; \cdot) )_{k\geq 1}$  are given by \eqref{eq:rhotilde_2} and \eqref{eq:rhodoubletildeALT}.
\end{lemma}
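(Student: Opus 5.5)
The identity will be established in the algebra $(\A,\ast)$ by computing $D_{\{x_1,x_2\}}\bro$ in two ways and using the exponent rule \eqref{eq:powerrule}; then \eqref{eq:fnk} identifies $\BF$ by comparing exponents.

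\emph{Step 1.} Since $\bro=\fexp\bro_T$, the chain rule \eqref{eq:chainrule} gives $D_{\{x_1\}}\bro=\bro\ast D_{\{x_1\}}\bro_T$. Since $D_{\{x_1,x_2\}}=D_{\{x_2\}}D_{\{x_1\}}$, a second application of the product rule \eqref{eq:prodrule} and the chain rule yields
\begin{align*}
D_{\{x_1,x_2\}}\bro=\bro\ast\Big(D_{\{x_1,x_2\}}\bro_T + D_{\{x_1\}}\bro_T\ast D_{\{x_2\}}\bro_T\Big).
\end{align*}
Because the constant term of $D_{\{x_1\}}\bro_T$ is $\rho^{(1)}(x_1)$, we may set $\mathbf{u}_i:=D_{\{x_i\}}\bro_T/\rho^{(1)}(x_i)$, which equals $\fexp\bom_{x_i}$ by \eqref{eq:rhotilde_1}.

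\emph{Step 2.} Substitute \eqref{eq:rhodoubletilde} for $D_{\{x_1,x_2\}}\bro_T$. This gives
\begin{align*}
D_{\{x_1,x_2\}}\bro_T=\rho^{(2)}(\xx_2)\,\fexp\bom_{\{x_1,x_2\}}\ast\mathbf{u}_1\ast\mathbf{u}_2-\rho^{(1)}(x_1)\rho^{(1)}(x_2)\,\mathbf{u}_1\ast\mathbf{u}_2 .
\end{align*}
The second term exactly cancels $D_{\{x_1\}}\bro_T\ast D_{\{x_2\}}\bro_T=\rho^{(1)}(x_1)\rho^{(1)}(x_2)\,\mathbf{u}_1\ast\mathbf{u}_2$. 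This cancellation is precisely why \eqref{eq:rhodoubletilde} was set up with the term $-\rho^{(1)}\rho^{(1)}/\rho^{(2)}\,\mathds{1}$. Hence
\begin{align*}
D_{\{x_1,x_2\}}\bro=\rho^{(2)}(\xx_2)\,\fexp\bro_T\ast\fexp\bom_{x_1}\ast\fexp\bom_{x_2}\ast\fexp\bom_{\{x_1,x_2\}}
=\rho^{(2)}(\xx_2)\,\fexp\big(\bro_T+\bom_{x_1}+\bom_{x_2}+\bom_{\{x_1,x_2\}}\big),
\end{align*}
where the last equality is \eqref{eq:powerrule}. Here $\bro_T$ is viewed as a function of the variables $\yy$ alone, which is legitimate since $\rho_T^{(0)}=0$ places it in $\A_0$.

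\emph{Step 3.} Compare with \eqref{eq:fnk}. The map $\fexp\colon\A_0\to\mathds{1}+\A_0$ is injective, since the $k$-point component of $\fexp\bF$ equals $F^{(k)}$ plus terms involving only lower components; this is the recursion noted in Remark~\ref{remref}. Therefore $\BF=\bro_T+\bom_{x_1}+\bom_{x_2}+\bom_{\{x_1,x_2\}}$, which is \eqref{eq:fnkdirect} componentwise. It remains to check that the recursively defined $\secod{k}$ agrees with \eqref{eq:rhodoubletildeALT}. For this, expand \eqref{eq:rhodoubletilde} at order $k$ using the convolution formula for a triple $\ast$-product over ordered partitions $\sigma\in\Pi_{\le 3}$, and isolate the top term $\secod{k}$.

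All identities are formal algebraic identities in $\A$, so no convergence is needed. The main obstacle is purely bookkeeping: verifying the product/chain-rule computation of $D_{\{x_1,x_2\}}\bro$, and that the components of $D_{\{x_i\}}\bro_T$ in the $\yy$ variables are correctly normalized. For the latter, the $0$-component is $\rho^{(1)}(x_i)$ and the $0$-component of $D_{\{x_1,x_2\}}\bro_T$ is $\rho^{(2)}_T(\xx_2)$, consistent with the constant term $\rho^{(2)}/\rho^{(2)}-\rho^{(1)}\rho^{(1)}/\rho^{(2)}$ on the right of \eqref{eq:rhodoubletilde}.
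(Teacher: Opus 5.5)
Your proposal is correct and follows the paper's route. You compute $D_{\{x_1,x_2\}}\bro=\fexp\bro_T\ast\big(D_{\{x_1\}}\bro_T\ast D_{\{x_2\}}\bro_T+D_{\{x_1,x_2\}}\bro_T\big)$ by the chain and product rules, substitute \eqref{eq:rhodoubletilde} so that the $-\rho^{(1)}(x_1)\rho^{(1)}(x_2)\mathds{1}$ term cancels the cross term, and conclude with \eqref{eq:rhotilde_1} and \eqref{eq:powerrule}; your explicit injectivity argument for $\fexp$ (to read off $\BF$ from \eqref{eq:fnk}) and your link between \eqref{eq:rhodoubletilde} and \eqref{eq:rhodoubletildeALT} merely spell out what the paper leaves implicit via Remark~\ref{remref}.
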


In \eqref{eq:fnkdirect} the term $\rho_t^{(k)}$ corresponds to $ \log \int_{\Gamma_{\Lambda^c}} \exp\Big(-W(\xx_2\mid \eta)\Big) \dP(\eta)$, the terms $\bom_x$ to $\mu$ and hence $H(\xx_2)$ to $\bom_{\{ x_1, x_2\} }$.

\begin{proof}
Recall the definition of $\BF$ in \req{fnk}. Using $\bro =\fexp\bro_T$ and \req{chainrule}, one finds
\begin{align}
    \rho^{(2)}(\xx_2) \fexp \BF 
    & =D_{\{x_1,x_2\}}\bro =
    D_{\{x_1\}}(D_{\{x_2\}}\fexp\bro_T)
    \\ & =D_{\{x_1\}}(\fexp\bro_T\ast D_{\{x_2\}}\bro_T),\nonumber
\end{align}
which by \req{prodrule} is equal to 
\begin{align*}
    %\frac{D_{\{x_1\}}(\fexp\bro_T\ast D_{\{x_2\}}\bro_T)}{\rho^{(2)}(\xx_2)}
    %&=\frac{\fexp\bro_T\ast  D_{\{x_1\}}\bro_T\ast D_{\{x_2\}}\bro_T
    %+\fexp\bro_T\ast D_{\{x_1,x_2\}}\bro_T}{\rho^{(2)}(\xx_2)} \\
    %&
    =\fexp\bro_T\ast \left( D_{\{x_1\}}\bro_T\ast D_{\{x_2\}}\bro_T
    + D_{\{x_1,x_2\}}\bro_T \right).
\end{align*}
Plugging in \req{rhodoubletilde} one obtains
\begin{align*}
 & =\fexp\bro_T\ast  D_{\{x_1\}}\bro_T\ast D_{\{x_2\}}\bro_T 
 \\ & + \fexp\bro_T\ast  \left(\rho^{(2)}(\xx_2) \fexp\bom_{(x_1,x_2)} -\rho^{(1)}(x_1)\rho^{(1)}(x_2) \mathds{1} \right)\ast \frac{D_{\{x_1\}}\bro_T}{\rho^{(1)} (x_1) }
    \ast \frac{D_{\{x_2\}}\bro_T}{\rho^{(1)}(x_2 )} 
 \\ & =  \rho^{(2)}(\xx_2) \fexp\bro_T\ast    \fexp\bom_{(x_1,x_2)}  \ast \frac{D_{\{x_1\}}\bro_T}{\rho^{(1)} (x_1) } \ast \frac{D_{\{x_2\}}\bro_T}{\rho^{(1)} (x_2) }.
\end{align*}
The result follows by \req{rhotilde_1} and \req{powerrule}.
%\begin{align*}
 %   \frac{D_{\{x_1\}}\fexp\bro_T\ast D_{\{x_2\}}\bro_T}{\rho^{(2)}(\xx_2)}
 %   &=
 %   % \fexp\bro_T\ast\frac{ D_{\{x_1\}}\bro_T\ast D_{\{x_2\}}\bro_T
 %   % + D_{\{x_1,x_2\}}\bro_T}{\rho^{(2)}(\xx_2)}
 %   \fexp\bro_T\ast \frac{D_{\{x_1\}}\bro_T\ast D_{\{x_2\}}\bro_T}{\rho^{(2)}(\xx_2)} \\
 %   &+\fexp\bro_T\ast\left(\fexp\bom_{(x_1,x_2)} -\frac{\rho^{(1)}(x_1)\rho^{(1)}(x_2)}{\rho^{(2)}(\xx_2)}\right)\ast \frac{D_{\{x_1\}}\bro_T}{\rho^{(1)} (x_1) }
 %   \ast \frac{D_{\{x_2\}}\bro_T}{\rho^{(1)}(x_2 )}
 %    % &=\fexp\bro_T\ast\frac{ D_{\{x_1\}}\bro_T\ast D_{\{x_2\}}\bro_T
  %   % + D_{\{x_1,x_2\}}\bro_T}{\rho^{(2)}(\xx_2)}{\rho^{(2)}(\xx_2)}
%\end{align*}
%which can be simplified using \req{rhotilde} to 
%\begin{align*}
 %   \frac{D_{\{x_1\}}\fexp\bro_T\ast D_{\{x_2\}}\bro_T}{\rho^{(2)}(\xx_2)}
  %  &=\fexp\bro_T\ast \fexp\bom_{(x_1,x_2)}\ast \frac{D_{\{x_1\}}\bro_T}{\rho^{(1)} (x_1) }
%    \ast \frac{D_{\{x_2\}}\bro_T}{\rho^{(1)}(x_2 )} \\
%    &=\fexp\bro_T\ast \fexp\bom_{(x_1,x_2)}
%    \ast \fexp\bom_{x_1}
%    \ast  \fexp\bom_{x_2}.
%\end{align*}
%Using \req{powerrule} then proves the claim.
\end{proof}

In light of Lemma \ref{thm:reb} we need bounds on the integrals of $(\fod{k}(x_1; \cdot))_{l\geq 1}$ and $(\secod{k}(\xx_2; \cdot))_{k\geq 1}$. The first family $(\fod{k}(x_1; \cdot))_{l\geq 1}$, which appears in the expansion of the chemical potential, was already investigated in \cite{Frommer24}, where it was shown that there exists a $\widehat{C}_1>0$ such that for all $x_1 \in \R$ it holds that
\begin{align}\label{eq:omega1_bound}
    \int_{\Lambda^k}\left|
    \fod{k}(x,\yy_k) 
    \right|\dyy_k
    \leq  k! 
    \widehat{C}_1
    \left(1+ \frac{MA}{2\log 2-1}
    \right)^k D_\rho^k .
\end{align}
In Theorem~3.1 in  \cite{Frommer24} from this then  \eqref{eq:muexpansionfinal} is concluded.
Thus, it remains to find bounds for the integrals of $(\secod{k}(\xx_2; \cdot ))_{k\geq 1}$.

\begin{proposition}\label{prop:bound}
Assume that $(\rho_T^{(m)})_m$ satisfies Assumption \ref{ass:B} and Assumption \ref{ass:C}. Then there are constants $\widehat{C}_2>0$, $\chi >0$
%=\chi(A,M)>0$ and $\theta=\theta(A,M)>0 $ 
such that
\begin{align}\label{eq:rhodtildebound}
     \int_{\Lambda^k}\left|\secod{k}(\xx_2,\yy_k)\right|\dyy_k \leq 
     k!
     \widehat{C}_2
     D_\rho^k \chi^k
%     \left(
%    \min
%    \left\lbrace\frac{-
%    \chi\pm 
%    \sqrt{\chi^2-4\cdot\theta (1-2\log2)}}{2\theta}
%    \right\rbrace\right)^{-k},
%     % \left(1-\sqrt{
%    % \frac{d(r)M^3A^4}{2\mathscr{W}_1 \left(
%    % \frac{1}{2}\exp \left[\log 2 -1+d(r)MA^2\right]d(r)MA^2
%    % \right)}}\right)^{-k}
\end{align}
for every bounded $\Lambda \subset \R^d$ and all $\xx_2 \in (\R^d)^2$.
\end{proposition}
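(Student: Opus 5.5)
We argue by strong induction on $k$, using the recursion \eqref{eq:rhodoubletildeALT}. Set
\[
a_k:=\sup_{\xx_2}\frac{1}{k!}\int_{\Lambda^k}\bigl|\secod{k}(\xx_2,\yy_k)\bigr|\dyy_k .
\]
We aim to show $a_k\le \widehat C_2 (\chi D_\rho)^k$. Throughout, the prefactors $1/\rho^{(2)}(\xx_2)$ are handled as follows. Assumption~\ref{ass:B} gives bounds proportional to $\rho^{(1)}(x_1)\rho^{(1)}(x_2)$, and Assumption~\ref{ass:C} turns $\rho^{(1)}(x_1)\rho^{(1)}(x_2)/\rho^{(2)}(\xx_2)$ into the constant $d(r)$. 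Hence, in the regime $|x_1-x_2|\ge r$ relevant for Theorem~\ref{thm:mainthm}, every coefficient in \eqref{eq:rhodoubletildeALT} is controlled uniformly in $\xx_2$. The base case is immediate:
\[
\int|\rho^{(3)}_T(\xx_2,y)|\dy/\rho^{(2)}(\xx_2)\le 2!\,MA^2D_\rho\, d(r).
\]

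For the inductive step we bound the four groups of terms in \eqref{eq:rhodoubletildeALT} separately.

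\begin{enumerate}
\item The first term gives at most $(k+1)!\,MA^2d(r)D_\rho^k$.
\item The second term: $|\rho^{(2)}_T|/\rho^{(2)}\le 1+d(r)$, since $\rho^{(2)}_T=\rho^{(2)}-\rho^{(1)}\rho^{(1)}$. The sum over ordered pairs $(\sigma_1,\sigma_2)$ factorizes after integration via Fubini. Each factor is controlled by \eqref{eq:cbrho1} with $m=1$, giving $\alpha!\,MAD_\rho^{\alpha}$ (for $\alpha=0$ the factor is $1$). Summing $\binom{k}{\alpha_1}\alpha_1!\,\alpha_2!=k!$ over $\alpha_1$ yields $O(k\,k!\,(MA)^2D_\rho^k)$.
\item The third term is treated the same way. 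The inner sum $\sum_{l}\frac1{l!}\sum_{\pi}\prod\secod{\kappa_i}$ over $\sigma_3$ is bounded using the induction hypothesis. By the standard generating-function identity, $\sum_l\frac1{l!}\sum_{\pi\in\Pi_l(\alpha)}\prod \kappa_i!\,b_{\kappa_i}$ is the $\alpha$-th coefficient (times $\alpha!$) of $e^{B(z)}-1$, where $B(z)=\sum_j b_jz^j$.
\item The last term, $-\sum_{l\ge2}$, is exactly the nonlinear part of the same exponential.
\end{enumerate}

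To close the induction, we pass to generating functions in the manner of the proof of \eqref{eq:omega1_bound} in \cite{Frommer24}. Let $W(z)=\sum_k a_k z^k$ and majorize all coefficients. The recursion yields a coefficientwise inequality of the form
\[
W(z)\ \preceq\ \frac{c_1 z}{(1-z)^2} + \frac{c_2}{(1-z)^2}\Bigl(e^{W(z)}-1\Bigr) + \bigl(e^{W(z)}-1-W(z)\bigr),
\]
where $z=D_\rho t$, the first term comes from the first two groups, the factor $(1-z)^{-2}$ from the $x_1,x_2$ factors, and the constants $c_i$ depend on $M,A,d(r)$. It is cleaner to use the equivalent form of \eqref{eq:rhodoubletilde}: write $\fexp\bom=\mathds 1+\boldsymbol U$ and solve linearly,
\[
\boldsymbol U=\frac{\rho^{(1)}\rho^{(1)}}{\rho^{(2)}}\mathds1-\mathds1+\frac{D_{\{x_1,x_2\}}\bro_T}{\rho^{(2)}}\ast\fexp(-\bom_{x_1}-\bom_{x_2}).
\]
This requires only \eqref{eq:omega1_bound} and \eqref{eq:cbrho1}, and gives $\int|U^{(k)}|\le k!\,C\,(\chi' D_\rho)^k$. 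One then takes $\bom=\log^\ast(\mathds 1+\boldsymbol U)$, whose coefficients are bounded by the composition of $\log(1-\,\cdot\,)$ with the majorant of $\boldsymbol U$. The majorant series $-\log\bigl(1-C\sum_{k\ge1}(\chi'z)^k\bigr)$ has positive radius of convergence. Cauchy estimates then give $a_k\le \widehat C_2\chi^k D_\rho^k$, with $\chi$ determined by this radius; this is the source of the constants $2\log2-1$ appearing in \eqref{eq:omega1_bound}.

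\textbf{Main obstacle.} We expect the hardest step to be the logarithm. The $l=1$ part of the constant term, $U^{(0)}=0$, is fine. However, the size of $U^{(k)}$ relative to $1$ must be small enough for the majorant to lie in the disc of convergence of $\log$. For this we would restrict $D_\rho$ small only through the explicit function $C\chi' z/(1-\chi' z)<1$, which still yields a pure exponential bound $\chi^k$ without requiring smallness for the proposition itself. A secondary nuisance is verifying that the $\alpha=0$ factors in the ordered partitions $\Pi_{\le m}$ are correctly counted as $1$, so that the factorization via Fubini is exact.
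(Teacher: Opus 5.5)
Your final argument, solving \eqref{eq:rhodoubletilde} linearly and then applying $\log^{\ast}$, is correct, but it is a genuinely different route from the paper's.

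\textbf{The paper's route.} The paper stays with the recursion \eqref{eq:rhodoubletildeALT}. It bounds the four groups by sequences $a_k$ (for $\rho_T^{(1+k)}/\rho^{(1)}$) and $c_k$ (for $\rho_T^{(2+k)}/\rho^{(2)}$), and defines the bounds $w_k$ through an implicit Bell-polynomial recursion. A doubling trick, $2w_k=\dots+B_k(w_1,\dots,w_k)$, turns this into $2E_w+2c_0-E_c-(c_0-1)E_a^2\le \exp(E_w+2\log E_a)$. The paper then needs truncations $E_w^K$, a continuity argument and the branch $\mathscr{W}_0$ to get a radius $t_0$ with $\ell(t_0)\le\log 2-1$, which it makes explicit by a quadratic inequality in $tD_\rho$.

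\textbf{Your route.} You multiply by $\fexp(-\bom_{x_1}-\bom_{x_2})$, so $\fexp\bom_{\{x_1,x_2\}}=\mathds 1+\boldsymbol U$ becomes explicit in known objects. Indeed $U(\emptyset)=(\rho^{(1)}(x_1)\rho^{(1)}(x_2)-\rho^{(2)}+\rho^{(2)}_T)/\rho^{(2)}=0$. You majorize $\boldsymbol U$ by $E_c(t)\exp\bigl(2\widehat C_1\tfrac{\beta D_\rho t}{1-\beta D_\rho t}\bigr)-c_0$ with $\beta=1+\tfrac{MA}{2\log 2-1}$, and take a single $\log^{\ast}$ with majorant $-\log(1-\tilde U)$. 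Everything depends on $t$ only through $D_\rho t$ and vanishes at $t=0$, so you get $\widehat C_2\chi^kD_\rho^k$ without any smallness of $D_\rho$.

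\textbf{Comparison.} Your route avoids the implicit equation and the branch selection entirely, and it does not suffer the loss of the doubling trick. The price is that you import \eqref{eq:omega1_bound}, whereas the paper uses only Assumptions~\ref{ass:B} and~\ref{ass:C}. You can remove that dependence: write $\fexp(-\bom_x)=\bigl(D_{\{x\}}\bro_T/\rho^{(1)}(x)\bigr)^{\ast-1}$ and expand in a Neumann series, whose majorant is $1/(2-E_a(t))$ with $E_a(t)=1+\tfrac{MAD_\rho t}{1-D_\rho t}$. You should also state explicitly why $\bom_{\{x_1,x_2\}}=\log^{\ast}(\mathds 1+\boldsymbol U)$: $\fexp\colon\A_0\to\mathds 1+\A_0$ is a bijection with inverse $\log^{\ast}$, and all sums are finite pointwise. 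Your restriction to $|x_1-x_2|\ge r$ is right; the paper's proof needs it too, since its $c_k$ contain $d(r)$.

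\textbf{Slips in the parts you set aside.}
\begin{enumerate}
\item Your base case copies Remark~\ref{remref}, but \eqref{eq:rhodoubletildeALT} with $k=1$ gives $\secod{1}(\xx_2;y)=\frac{\rho^{(3)}_T(\xx_2,y)}{\rho^{(2)}(\xx_2)}-\frac{\rho^{(2)}_T(\xx_2)}{\rho^{(2)}(\xx_2)}\Bigl(\frac{\rho^{(2)}_T(x_1,y)}{\rho^{(1)}(x_1)}+\frac{\rho^{(2)}_T(x_2,y)}{\rho^{(1)}(x_2)}\Bigr)$, so there is an extra term to bound.
\item In your displayed coefficientwise inequality, the factor multiplying $e^{W}-1$ must vanish at $z=0$. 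The third group only occurs for $0<\alpha_3<k$, so the correct factor is of the type $E_a(t)^2-1$. With $c_2(1-z)^{-2}$, the $z^k$-coefficient on the right contains $c_2$ times the $z^k$-coefficient of $W$, and the induction does not close unless $c_2<1$.
\item The constant $2\log 2-1$ in \eqref{eq:omega1_bound} is not produced by a $\log^{\ast}$ majorant. It is the maximum of $w\mapsto 2w-e^{w}+1$, which comes from the doubling trick. Your $\log^{\ast}$ argument applied to the one-body family would instead give $\int_{\Lambda^k}|\fod{k}(x,\yy_k)|\dyy_k\le k!\,(1+MA)^kD_\rho^k$.
\end{enumerate}
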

\begin{proof}
Let $(w_k)_{k \geq 1}$ be a sequence of upper bounds for the integrals over $(\secod{k})_{k \geq 1}$, i.e.,
\begin{align}\label{eq:secodbd}
    \int_{\Lambda^k}\left|\secod{k}(\xx_2,\yy_k)\right|\dyy_k
    \leq w_k \qquad \forall \xx_2 \in (\R)^2,
\end{align}
which we want to establish inductively.
Denote by $(a_k)_{k\geq 0}$ and $(c_k)_{k\geq 0}$ sequences such that 
\begin{align}\label{eq:ak}
    % \frac{1}{\rho^{(1)}(x)}
    \frac{1}{\rho^{(1)}(x)}
    \int_{\Lambda^k}\left|\rho_T^{(1+k)}(x,\yy_k)\right|\dyy_k \leq a_k \qquad \forall x \in \R
\end{align}
and 
\begin{align}\label{eq:ck}
    \frac{1}{\rho^{(2)}(\xx_2)}\int_{\Lambda^k}\left|\rho_T^{(2+k)}(\xx_2,\yy_k)\right|\dyy_k \leq c_k \qquad \forall x \in \R^2.
\end{align}
From Assumptions \ref{ass:B} and \ref{ass:C} it follows that they can be chosen as 
\begin{align}\label{eq:akexp}
    a_0=1, \qquad a_k= k!MA D_\rho^k, \qquad k\geq 1
\end{align}
and
\begin{align}\label{eq:ckexp}
    c_k = d(r)(k+1)!MA^2D_\rho^k, \qquad k \geq 0.
\end{align}
Integrating \req{rhodoubletildeALT} with respect to $\yy_k$ and using the triangle inequality we obtain:
\begin{align}\label{eq:wkineq1}
    &\int_{\Lambda^k}\left|\secod{k}(\xx_2,\yy_k)\right|\dyy_k
    \leq 
      \int_{\Lambda^k}\left|\frac{\rho^{(2+k)}_T(\xx_2,\yy_k)}{\rho^{(2)}(\xx_2)} \right|\dyy_k 
      \\ \label{eq:wkineq2}
     &+\left|\frac{\rho_T^{(2)}(\xx_2)}{\rho^{(2)}(\xx_2)} \right|
      \int_{\Lambda^k} \sum_{\sigma\in \Pi_{\leq 2}(\{\yy_k\})}
     \left|\frac{\rho^{(1+\alpha_1)}_T(x_1,\sigma_1)}{\rho^{(1)}(x_1)} 
     \frac{\rho^{(1+\alpha_2)}_T(x_2,\sigma_2)}{\rho^{(1)}(x_2)}
     \right|\dyy_k
     \\ \label{eq:wkineq3}
     & +\!\!\!\! \int_{\Lambda^k}\!\! \sum_{\sigma\in \Pi_{\leq 3}(\{\yy_k\})\atop 0<\alpha_3<k}\!\sum_{l=1}^{\alpha_3} \frac{1}{l!} \!\!\! \!\!\sum_{\pi\in\Pi_l(\sigma_3)}
     \!\!\left|
     \frac{\rho^{(1+\alpha_1)}_T(x_1,\sigma_1)}{\rho^{(1)}(x_1)}
     \frac{\rho^{(1+\alpha_2)}_T(x_2,\sigma_2)}{\rho^{(1)}(x_2)}
     \prod_{i=1}^l \secod{\kappa_i}\! (\xx_2,{\pi_i})
     \right| \! \dyy_k \\ \label{eq:wkineq4}
     &+\int_{\Lambda^k} \sum_{l=2}^{k}\frac{1}{l!} \sum_{\pi\in\Pi_l(\yy_k)}
     \left|
     \prod_{i=1}^l \secod{\kappa_i} (\xx_2,{\pi_i})
     \right|\dyy_k.
\end{align}
Using the fact that the contribution of each integral only depends on the size of the corresponding part of the partition and not on the particular elements, the above sums can be simplified. First, note that by assumption the term on the right-hand side of \req{wkineq1} can be bounded by $c_k$. Second, for the terms in \req{wkineq2} it follows that
\begin{align*}
    &\left|\frac{\rho_T^{(2)}(\xx_2)}{\rho^{(2)}(\xx_2)} \right|
     \int_{\Lambda^k} \sum_{\sigma\in \Pi_{\leq 2}(\{\yy_k\})}
      \left|\frac{\rho^{(1+\alpha_1)}_T(x_1,\sigma_1)}{\rho^{(1)}(x_1)} 
     \frac{\rho^{(1+\alpha_2)}_T(x_2,\sigma_2)}{\rho^{(1)}(x_2)}
     \right|\dyy_k 
     \\
     &\leq 
     \left|\frac{\rho_T^{(2)}(\xx_2)}{\rho^{(2)}(\xx_2)} \right|
     \sum_{l=0}^k\binom{k}{l}
      \int_{\Lambda^l}\left|\frac{\rho^{(1+l)}_T(x_1,\yy_l)}{\rho^{(1)}(x_1)} \right|\dyy_l
      \int_{\Lambda^{k-l}}\left|
     \frac{\rho^{(1+k-l)}_T(x_2,\yy_{k-l})}{\rho^{(1)}(x_2)}
     \right|\dyy_{k-l}
     \\ & \leq 
     c_0
     \sum_{l=0}^k\binom{k}{l}
     a_la_{k-l}, 
\end{align*}
where in the last inequality we used \req{ak} and that $a_0 =1$. Lastly, the terms in \req{wkineq3} and \req{wkineq4} can be re-written using Bell polynomials. Recall that the $k$-th \emph{exponential Bell polynomial} is defined by
\begin{align}\label{eq:bellpoly}
    B_k (t_1,\dots,t_k) = 
    \sum_{l=1}^k \frac{1}{k!}
    \sum_{\pi \in \Pi_l(k)} \prod_{i=1}^l t_{\kappa_i}, \qquad B_0 =1.
\end{align}
For the terms in \req{wkineq4} one concludes
\begin{align*}
    &\sum_{l=2}^{k} \frac{1}{l!}\sum_{\pi\in\Pi_l(\yy_k)}
     \int_{\Lambda^k}\left|
     \prod_{i=1}^l \secod{\kappa_i} (\xx_2,{\pi_i})
     \right|\dyy_k \\
     &= B_k
     \left(
     \int_{\Lambda^1}\left|
     \secod{1} (\xx_2,y_1)
     \right|\dy_1, \dots,
     \int_{\Lambda^{k-1}}\left|
     \secod{k-1} (\xx_2,\yy_{k-1})
     \right|\dyy_{k-1}, 0
     \right),
\end{align*}
which can be bounded using \eqref{eq:secodbd} by
$
= B_k
     \left(w_1,\dots,w_{k-1},0
     \right).
$
Similarly, for the terms in \req{wkineq3} it follows that
\begin{align*}
    &\int_{\Lambda^k} \sum_{\sigma\in \Pi_{\leq 3}(\{\yy_k\})\atop 0<\alpha_3<k}
    \sum_{l=1}^{\alpha_3}\frac{1}{l!} \sum_{\pi\in\Pi_l(\sigma_3)}
     \left|
     \frac{\rho^{(1+\alpha_1)}_T(x_1,\sigma_1)}{\rho^{(1)}(x_1)}
     \frac{\rho^{(1+\alpha_2)}_T(x_2,\sigma_2)}{\rho^{(1)}(x_2)}
     \prod_{i=1}^l \secod{\kappa_i} (\xx_2,{\pi_i})
     \right|\dyy_k \\
     &\leq  
     \sum_{l=1}^{k-1} \binom{k}{l}
     B_l\left(w_1,\dots,w_{l}
     \right)
     \sum_{m=0}^{k-l}\binom{k-l}{m}
     a_m a_{k-l-m}.
\end{align*}
Adding the above results and using \req{secodbd} one finally arrives at
% \begin{align*}
%     &\int_{\Lambda^k}\left|\secod{k}(\xx_2,\yy_k)\right|\dyy_k
%     \leq 
%       \int_{\Lambda^k}\left|\frac{\rho^{(2+k)}_T(\xx_2,\yy_k)}{\rho^{(2)}(\xx_2)} \right|\dyy_k 
%       \\
%      &+\left|\frac{\rho_T^{(2)}(\xx_2)}{\rho^{(2)}(\xx_2)} \right|
%      \sum_{l=0}^k\binom{k}{l}
%       \int_{\Lambda^l}\left|\frac{\rho^{(1+l)}_T(x_1,\yy_l)}{\rho^{(1)}(x_1)} \right|\dyy_l
%       \int_{\Lambda^{k-l}}\left|
%      \frac{\rho^{(1+k-l)}_T(x_2,\yy_{k-l})}{\rho^{(1)}(x_2)}
%      \right|\dyy_{k-l}
%      \\
%      &+\sum_{l=1}^{k-1}
%      B_l(\int_{\Lambda}\left|\secod{1}(\xx_2,y_1)\right|\dy_1,\dots,\int_{\Lambda^l}\left|\secod{l}(\xx_2,\yy_l)\right|\dyy_l) \\
%      &\sum_{m=0}^{k-l}\binom{k-l}{m}
%      \int_{\Lambda^m}\left|\frac{\rho^{(1+m)}_T(x_1,\yy_m)}{\rho^{(1)}(x_1)} \right|\dyy_m
%       \int_{\Lambda^{k-l-m}}\left|
%      \frac{\rho^{(1+k-l-m)}_T(x_2,\yy_{k-l-m})}{\rho^{(1)}(x_2)}
%      \right|\dyy_{k-l-m}
%      % +\sum_{\sigma\in \Pi_{\leq 3}(\{\yy_k\})\atop 0<\alpha_3<k}\sum_{l=1}^{\alpha_3} \sum_{\pi\in\Pi_l(\sigma_3)}
%      % \int_{\Lambda^k}\left|
%      % \frac{\rho^{(1+\alpha_1)}_T(x_1,\sigma_1)}{\rho^{(1)}(x_1)}
%      % \frac{\rho^{(1+\alpha_2)}_T(x_2,\sigma_2)}{\rho^{(1)}(x_2)}
%      % \prod_{i=1}^l \secod{\kappa_i} (\xx_2,{\pi_i})
%      % \right|\dyy_k.
% \end{align*}
\begin{align*}
 &   \int_{\Lambda^k}\left|\secod{k}(\xx_2,\yy_k)\right|\dyy_k
    \leq \,\,
c_k + B_k\left(w_1,\dots,w_{k-1},0\right) \\
&+\frac{\rho^{(2)}_T(\xx_2)}{\rho^{(2)}(\xx_2)}\sum_{l=0}^k\binom{k}{l}
a_l a_{k-l}
+ \sum_{l=1}^{k-1} \binom{k}{l}B_l(w_1,\dots,w_l)\sum_{m=0}^{k-l}\binom{k-l}{m}
a_m a_{k-l-m}.
\end{align*}
% where the sequence $(a_k)_{k\geq 0}$ is such that $a_0=1$ and for $k\geq 1$ we have $a_k= k!C D_\rho^k$. Now by \req{rho2lowerbound} and \req{cbrho1} for $n=2$ we get
% \begin{align}
%     \frac{1}{\rho^{(2)}(\xx_2)}\int_{\Lambda^k}\left|\rho_T^{(2+k)}(\xx_2,\yy_k)\right|\dyy_k \leq d(r) (k+1)!C D_\rho^k
% \end{align}
% for $k\geq 0$. We thus define
% inally, denoting by $(c_k)_{k\geq 0}$ a sequence such that $\frac{\rho^{(2)}_T(\xx_2)}{\rho^{(2)}(\xx_2)}\leq c_0$ and for $k\geq 1$
% \begin{align}\label{eq:ckdef}
%     \frac{1}{\rho^{(2)}(\xx_2)}\int_{\Lambda^k}\left|\rho_T^{(2+k)}(\xx_2,\yy_k)\right|\dyy_k \leq c_k,
% \end{align}
% we define
Thus, defining $(w_k)_{k \geq 1} $ with the recursion $w_1:=c_1+2c_0a_1$ and
\begin{align}\label{eq:wkdef}
    w_k:= &
    c_k + B_k\left(w_1,\dots,w_{k-1},0\right) 
+c_0\sum_{l=0}^k\binom{k}{l}
a_l a_{k-l}
\\ & +\sum_{l=1}^{k-1} \binom{k}{l}B_l(w_1,\dots,w_l)\sum_{m=0}^{k-l}\binom{k-l}{m}
a_m a_{k-l-m}, \nonumber
\end{align}
for $k\geq 2$ then it satisfies the bound in \req{secodbd}. The rest of this proof is devoted to finding an asymptotic bound for the sequence $(w_k)_{k\geq 1}$ using its exponential generating function. That is, finding a lower bound in the radius of convergence of the exponential generating function will give an upper bound on the asymptotics for $(w_k)_{k\geq 1}$.
% where the sequence $(c_k)_{k\geq 0}$ is given by $c_k = d(r)(k+1)!CD_\rho^k$ as short-hand.
% Note that, since $\rho^{(2)}\to 0 $ as $x_1-x_2\to 0$, the sequence $(c_k)_{k\geq 0}$ depends on $\xx_2$. \\
Using the fact that the $k$-th Bell polynomial is linear in the last argument, i.e., that
\begin{align*}
    B_k\left(s_1,\dots,s_{k-1},s_k+s_k'\right) =
    B_k\left(s_1,\dots,s_{k-1},s_k\right)
    +B_k\left(s_1,\dots,s_{k-1},s_k'\right),
\end{align*}
one can rewrite \req{wkdef} as  
\begin{align}\label{eq:wkdef2}
   2  w_k  = & \ c_k +
    B_k\left(w_1,\dots%,w_{k-1}
    ,w_k\right) +c_0\sum_{l=0}^k\binom{k}{l}
a_l a_{k-l}
% \\ & 
+\sum_{l=1}^{k-1} \binom{k}{l}B_l(w_1,\dots,w_l)\sum_{m=0}^{k-l}\binom{k-l}{m}
a_m a_{k-l-m}  \nonumber
%\\ = & \ c_k  +(c_0 -1)\sum_{l=0}^k\binom{k}{l}
%a_l a_{k-l} +\sum_{l=0}^{k} \binom{k}{l}B_l(w_1,\dots,w_l)\sum_{m=0}^{k-l}\binom{k-l}{m}
%a_m a_{k-l-m}  . \nonumber
\end{align}
and $2w_1 = c_1 +2c_0 a_1 + w_1$.
We denote by $E_w$ the exponential generating function $E_w$ for $(w_k)_{k\geq 1}$. By \req{wkdef2} it follows that for any $K \in \mathbb{N} \cup \{ \infty\}$ holds that
\begin{align}
    &2 E^K_w(t) = 2\sum_{k=1}^K \frac{t^k}{k!} w_k \label{eq:Ewineq} \\
     \leq \,\,&\sum_{k=2}^K \frac{t^k}{k!} \left(
    c_k
+(c_0-1)\sum_{l=0}^k\binom{k}{l}
a_l a_{k-l}
+\sum_{l=0}^{k} \binom{k}{l}B_l(w_1,\dots,w_l)\sum_{m=0}^{k-l}\binom{k-l}{m}
a_m a_{k-l-m}\right) \nonumber \\ 
& +( c_1 +2c_0 a_1 + w_1)t + (c_0 + (c_0-1) a_0^2 + B_0 a_0^2) - 2c_0 , \nonumber
\end{align}
where we get an inequality because on the right hand side we add further non negative term. The last two brackets cancel, because $B_0=a_0=1$. In the case $ K= \infty$ we have equality, but both sides may be infinity.
Denoting by $E_c$ and $E_a$ the exponential generating functions of the sequences $(c_k)_{k\geq 0}$ and $(a_k)_{k\geq 0}$ respectively, it follows that
\begin{align*}
   &2 E^K_w(t) 
  % =
   %1-c_0
   %\\
   %&+ \frac{1}{2}\sum_{k=0}^\infty \frac{t^k }{k!} 
   %\left(
    %c_k
%+(c_0-1)\sum_{l=0}^k\binom{k}{l}
%a_l a_{k-l}
%+\sum_{l=0}^{k} B_l(w_1,\dots,w_l)\sum_{m=0}^{k-l}\binom{k-l}{m}
%a_m a_{k-l-m}\right) \\
%&=1-c_0 +
%\frac{1}{2}\left(
%E_c(t)+(c_0-1)E_a(t)^2+\exp\left(E_w(t)\right)E_a(t)^2
%\right),
\leq E_c(t) + (c_0-1) E_a(t)^2 + \exp\left(E^K_w(t)\right)E_a(t)^2 -2c_0,
\end{align*}
which gives:
\begin{align}\label{eq:wkfunctionaleq}
   2E^K_w(t) +2c_0-E_c(t)-(c_0-1)E_a(t)^2
=
\exp\left(E^K_w(t)+2\log E_a(t)\right).
\end{align}
%Adding and subtracting $4\log E_a(t)$ gives
%\begin{align}\label{eq:lambert1}
%     & 2\left(E_w(t)+2\log E_a(t)\right)-4\log E_a(t)+2c_0-E_c(t)-(c_0-1)E_a(t)^2
%     \\ & =
%\exp\left(E_w(t)+2\log E_a(t)\right).
%\end{align}
Using the substitutions
\begin{align*}
    s^K(t) :=  E^K_w(t) + \frac{1}{2} \left( 2c_0-E_c(t)-(c_0-1)E_a(t)^2 \right) 
\end{align*}
and 
\begin{align*}
    \ell(t) :=2\log E_a(t) -  \frac{1}{2} \left( 2c_0-E_c(t)-(c_0-1)E_a(t)^2 \right), 
\end{align*}
one can write \eqref{eq:wkfunctionaleq} as
\begin{align*}
    2s^K(t)  \leq e^{s^K(t)+\ell(t)},
\end{align*}
or
\begin{align*}
    -s^K(t) e^{-s^K(t)} \geq -\frac{1}{2}e^{\ell(t)}.
\end{align*}
Noting that $s^K$ is a continuous function with $s_K(0)= \frac{1}{2}$ and $\ell$ is a monotone increasing analytic function for non-negative $t$ near zero with $\ell(0) = -1/2$. The function is $x \mapsto -x e^{-x}$ with a unique minimum at $x=1$ of value $-1/e$ and monotone decreasing in $( - \infty, 1 ]$. Choose $t_0 >0$ smaller than the radius of convergence of $\ell$ and such that $\ell(t) \leq \log(2) -1$, that is for all $t \in [0, t_0]$ we have
\[
-s^K(t) e^{-s^K(t)} \geq -\frac{1}{2}e^{l(t)} \geq -\frac{1}{2}e^{\ell(t_0)} \geq - \frac{1}{e} .
\]
As $-s^K(0) e^{-s^K(0)} > - \frac{1}{2} > -\frac{1}{2}e^{l(t_0)} $ by continuity there exists a minimal $t_K \in (0, \infty]$ with $S_K(t_K) \in (- \infty, 1]$ and  
\[
-\frac{1}{2}e^{\ell(t_K)} \leq -s^K(t_K) e^{-s^K(t_K)} = -\frac{1}{2}e^{\ell(t_0)}.
\]
Using the monotonicity of $\ell$ we get $t_K \geq t_0$ and thus for all $ t\in [0,t_0]$ holds that $s_K(t) \leq 1$ and hence 
\begin{align*}
    s^K(t) \leq -\mathscr{W}_0\left(
    -\frac{1}{2}
    e^{\ell(t)} \right) \leq -\mathscr{W}_0\left(
    -\frac{1}{2}
    e^{\ell(t_0)}
    \right),
\end{align*}
where $\mathscr{W}_0 : [-1/e, \infty)  \rightarrow [-1,\infty)$ is the zero-branch of the \emph{Lambert W function}, which is monotone increasing function. Hence $s^K(t)$ as a monotone sequence in $K$ converges and the limit is an analytic function with radius of convergence greater or equal to $t_0$ and hence using \eqref{eq:Ewineq} for $K = \infty$ we get
\[
s(t) = -\mathscr{W}_0\left(
    -\frac{1}{2}
    e^{\ell(t)} \right).
\] 
It remains to quantify $t_0$, that is to find the smallest $t_0$ such that
\[
\ell(t_0) =2\log E_a(t_0) -  \frac{1}{2} \left( 2c_0-E_c(t_0)-(c_0-1)E_a(t_0)^2 \right) \leq \log(2) -1.
\]
%
%Re-substituting, one then finds
%\begin{align*}
%    E_w(t)+2\log E_a(t)
%    =
%    &-\mathscr{W}\left(
%    -\frac{1}{2}
%    \exp\left(
%    -\frac{-4\log E_a(t)+2c_0-E_c(t)-(c_0-1)E_a(t)^2}{2}
%    \right)
%    \right)\\
%    &-\frac{-4\log E_a(t)+2c_0-E_c(t)-(c_0-1)E_a(t)^2}{2}.
%    % &-\mathscr{W}
%    % \left(
%    % -\frac{1}{2}
%    % \exp\left(
%    % \frac{-2c_0 +E_c(t)+(c_0-1)E_a^2(t)-4 \log E_a(t)}{2}
%    % \right)
%    % \right)\\
%    % &+
%    % \frac{-2c_0 +E_c(t)+(c_0-1)E_a^2(t)-4 \log E_a(t)}{2}.
%\end{align*}
%Since $\mathscr{W}$ is only defined for values larger than $-1/e$, one gets the condition
%\begin{align*}
%    % -\frac{1}{2}
%    % \exp\left(
%    % \frac{-2c_0 +E_c(t)+(c_0-1)E_a^2(t)-4 \log E_a(t)}{2}
%    % \right)
%    -\frac{1}{2}
%    \exp\left(
%    -\frac{-4\log E_a(t)+2c_0-E_c(t)-(c_0-1)E_a(t)^2}{2}
%    \right)
%    \geq -\frac{1}{e}
%\end{align*}
%which is equivalent to
%\begin{align*}
%    % -\frac{1}{2}
%    % \exp\left(
%    % \frac{-2c_0 +E_c(t)+(c_0-1)E_a^2(t)-4 \log E_a(t)}{2}
%    % \right)
%    \exp\left(
%    -\frac{-4\log E_a(t)+2c_0-E_c(t)-(c_0-1)E_a(t)^2}{2}
%    \right)
%    \leq \frac{2}{e}
%\end{align*}
%and thus
%\begin{align}\label{eq:almostthere}
%    {4\log E_a(t)+E_c(t)+(c_0-1)E_a(t)^2}
%    \leq 2\left(c_0+\log 2 -1\right).
%\end{align}
One can easily calculate that for $t < 1/D_\rho$ 
\begin{align*}
    E_a(t)= \sum_{k=0}^\infty
    \frac{t^k}{k!} a_k
    = \frac{MA tD_\rho}{1-tD_\rho}+1
\end{align*}
and 
\begin{align*}
    E_c(t) =\sum_{k=0}^\infty \frac{t^k}{k!}c_k
    = \frac{d(r) MA^2}{(1-tD_\rho)^2}.
\end{align*}
% In particular, it follows that
% \begin{align}\label{eq:EcEa}
%     E_c(t)= \frac{d(r)}{M} 
%     \left(
%     \frac{E_a(t)-1}{tD_\rho}
%     \right)^2,
% \end{align}
{It follows that
\begin{align*}
    {4\log\left(
    \frac{MA tD_\rho}{1-tD_\rho}+1
    \right) 
    +\frac{d(r) MA^2}{(1-tD_\rho)^2}
    +(c_0-1)\left(\frac{MA tD_\rho}{1-tD_\rho}+1\right)^2}
    \leq 
    2\left(c_0+\log 2 -1\right).
\end{align*}
Since $\log (1+s) \leq s$ for $s> -1 $ one can ensure that the above holds if
\begin{align*}
    \frac{4 MA tD_\rho}{1-tD_\rho}
    +\frac{d(r) MA^2}{(1-tD_\rho)^2}
    +(c_0-1)\left(\frac{MA tD_\rho}{1-tD_\rho}+1\right)^2
    \leq 
    2\left(c_0+\log 2 -1\right).
\end{align*}
% Recalling that $c_0=d(r) MA^2$ and factoring out one finds that
Equivalently, one can write 
% \textbf{THIS SEEMS TO BE THE WRONG DIRECTION}
\begin{align*}
    &{4 MA tD_\rho}(1-tD_\rho)
    +{d(r) MA^2}
    \\
    &+(c_0-1)\left({(MA tD_\rho)^2}
    +2{MA tD_\rho}(1-tD_\rho)
    +(1-tD_\rho)^2\right) \\
    &\leq 
    2\left(c_0+\log 2 -1\right)(1-tD_\rho)^2,
\end{align*}
which can be rewritten as
\begin{align*}
    &\left(
    (c_0-1)(MA)^2-2(c_0+1)MA
    +1-c_0-2\log 2
    \right)(tD_\rho)^2
    \\
    &+2\left(
    MA(c_0+1)+c_0-1+2\log 2
    \right)tD_\rho+
    1 - c_0 +d(r)M A^2 - 2\log2 \leq 0.
\end{align*}
Using $c_0 = d(r) MA^2$ this can be simplified to
\begin{align*}
    &\left((c_0-1)(MA)^2-2(c_0+1)MA
    +1-c_0-2\log 2
    \right)D_\rho^2t^2 \\
    &+2\left(
    MA(c_0+1)+c_0-1+2\log 2
    \right)D_\rho t 
    +1-2\log2 \leq 0.
\end{align*}
Since $1-2\log2 \leq 0$ this inequality can always be solved. Using the abbreviations 
\begin{align*}
    \chi := 
    2(MA(c_0+1)+c_0-1+2\log 2)
\end{align*}
and 
\begin{align*}
    \theta :=
    (c_0-1)(MA)^2-2(c_0+1)MA
    +1-c_0-2\log 2,
\end{align*}}
one finds the sufficient condition
\begin{align}\label{eq:tbound}
    D_\rho |t| \leq 
    \min
    \left\lbrace\left|\frac{-
    \chi\pm 
    \sqrt{\chi^2-4\cdot\theta (1-2\log2)}}{2\theta}
    \right|\right\rbrace.
\end{align}
Thus, it can be concluded that 
\begin{align*}
    w_k \lesssim 
    k! D_\rho^k
    \left(
    \min
    \left\lbrace\left|\frac{-
    \chi\pm 
    \sqrt{\chi^2-4\cdot\theta (1-2\log2)}}{2\theta}
    \right|\right\rbrace\right)^{-k},
\end{align*}
which was the claim.
\end{proof}
\begin{remark}
Using exponential generating functions {as above}, one can also significantly shorten the proof of Theorem 6.2 in \cite{Frommer24}.
\end{remark}

\begin{proofthm}
Fix $r>0$, then from Assumptions \ref{ass:B} and \ref{ass:C}, from \eqref{eq:omega1_bound}, that is  Theorem 6.2 in \cite{Frommer24}, and \req{rhodtildebound} it follows that for $|x_1-x_2|\geq r$ all three $\rho_T$, $\bom_x$ and $\bom_{\{ x_1, x_2\} }$ fulfil \eqref{eq:Fbound} and hence by \eqref{eq:fnkdirect} also $\BF$ itself.
Hence for $D_\rho$ small enough using Lemma \ref{thm:reb} we get from \eqref{eq:fnk} that   
\begin{align}\label{for}
    \log j_\Lambda^{(2)}(\xx_2) = \sum_{k=1}^\infty \frac{(-1)^k}{k!} \int_{\Lambda^k} F_{2, k}(\xx_2,\yy_{k})\dyy_k ,
\end{align}
which by Lemma~\ref{lemma:split} is equal to
\begin{align*}
    \log j_\Lambda^{(2)}(\xx_2) & = \sum_{k=1}^\infty \frac{(-1)^k}{k!} \int_{\Lambda^k} \rho_T^{(k)}(\yy_k)  \dyy_k + \sum_{k=1}^\infty \frac{(-1)^k}{k!} \int_{\Lambda^k} \bom_{x_1}(\yy_k)   \dyy_k \\ & + \sum_{k=1}^\infty \frac{(-1)^k}{k!} \int_{\Lambda^k} \bom_{x_2}(\yy_k)  \dyy_k + \sum_{k=1}^\infty \frac{(-1)^k}{k!} \int_{\Lambda^k} \bom_{\{ x_1, x_2\}}(\yy_k) \dyy_k   . \nonumber
\end{align*}
Recall the by Theorem~3.1 in  \cite{Frommer24} we have \eqref{eq:muexpansionfinal} and using \eqref{eq:j0rhot} we get that
\begin{align*}
    \log j_\Lambda^{(2)}(\xx_2) = \log j_\Lambda^{(0)} + 2\mu  + \sum_{k=1}^\infty \frac{(-1)^k}{k!} \int_{\Lambda^k}  \bom_{\{ x_1, x_2\} }(\yy_k)  \dyy_k .
\end{align*}
 Then \eqref{eq:uexpansionfinal} follows by Assumptions \ref{ass:A}.
%
%\begin{align*}
%    \int_{\Lambda^k}\! |F_{2, k}(\xx_2,\yy_{k})|\dyy_k 
%    &\leq k!\widehat{C}_3\left( \lebesgue(\Lambda)+1\right)
%     D_\rho^k\cdot \\
%     &\left(2+\frac{MA}{2\log2-1}+
%     \left(
%    \min
%    \left\lbrace\frac{-
%    \chi\pm 
%    \sqrt{\chi^2-4\cdot\theta (1-2\log2)}}{2\theta}
%    \right\rbrace\right)^{-1}\right)^k.
%\end{align*}
%Thus, for $D_\rho$ small enough Lemma \ref{thm:reb} implies
%\begin{align}\label{for}
%    \log j_\Lambda^{(2)}(\xx_2) = \sum_{k=1}^\infty \frac{(-1)^k}{k!} %\int_{\Lambda^k} F_{2, k}(\xx_2,\yy_{k})\dyy_k .
%\end{align}
%Now by Assumption \ref{ass:A} it follows that $\log j_\Lambda^{(2)}(\xx_2)-\log j_\Lambda^{(0)}\to 2\mu -H(\xx_2)$ as $\Lambda \nearrow\R^d$. Following the proof of Theorem 3.1 in \cite{Frommer24}, one can identify the terms in the expansion on the right-hand side of \eqref{for} and obtain \req{muexpansionfinal} and \req{uexpansionfinal}.
\end{proofthm}
% \newpage

\section{Examples}\label{sec4}

Let us give some examples for which we can easily check Assumption~\ref{ass:A}-\ref{ass:C}. We choose the examples quite different in order to demonstrate the wide range of system we can apply our main results to.

\subsection{Pair interaction}\label{ss:pp}
Let $u \colon \R^d \to \R\cup\{+\infty\}$ be an even function such that there exist $r_0>0$ and 
decreasing positive functions $\varphi:(0,r_0)\to \R^+_0$ and 
$\psi: [0,\infty)\to \R^+$ with
\be{varphi,psi}
   \int_{0}^{r_0} r^{d-1}\varphi(r)\dr = \infty
   \qquad \text{and} \qquad
   \int_{0}^\infty r^{d-1}\psi(r)\dr <\infty\,,
\ee
such that
\be{potentials}
\begin{aligned}
   u(x) &\,\geq\, \varphi(|x|) \qquad \textrm{ for }\ 0 < |x| < r_0\,, \\[1ex]
   |u(x)| &\,\leq\, \psi(|x|) \qquad \textrm{ for }\ |x| \geq r_0\,.
\end{aligned}
\ee
Then, it is well-known, cf.~\cite{Ruelle69},\cite{Ruelle70}, that $u$ induces a stable Hamiltonian and for $\mu\in\R$ sufficiently small the corresponding (unique) $(\mu,u)$-Gibbs measure $\PP$ satisfies Assumptions \ref{ass:A} and \ref{ass:B} with $D_\rho= D_\rho(\mu)$ a decreasing function in $\mu$. Lastly, note that, see ~\cite{Ruelle69}, \cite{Hanke18c}for  $\mu$ small enough, there holds
\begin{align*}
    \rho^2\leq ce^{u(x_2-x_1)}   \rho^{(2)}(\xx_2),
\end{align*}
for some $c>0$. Thus, for $r>0$ fixed, one gets that for 
\begin{align}
   d(r) :=  c \sup_{|x_1-x_2|>r}e^{u(x_2-x_1)}   
\end{align}
$\PP$ also satisfies Assumption \ref{ass:C}. Then, Theorem \ref{thm:mainthm} holds for $\mu$ small enough and $|x_1-x_2|>r$.

\begin{remark}
The families $(\fod{k}(x; \cdot))_{k\geq 1}$ and $(\secod{k}(x_1, x_2; \cdot) )_{k\geq 1}$ can be written as a convergent series in $\mu$ using cluster expansion for small enough $\mu$.   In the sense of formal power series the expansion will hold also for Gibbs measure with interactions which are not necessarily of pair type, but we stick to pair interaction for simplicity. The expansion will be quite  similar to the expansion of the truncated correlation functions $(\rho_T^{(k)})_{k\geq 1}$, which support that $(\fod{k}(x; \cdot))_{k\geq 1}$ and $(\secod{k}(x_1, x_2; \cdot) )_{k\geq 1}$ are natural objects.

Let us recall  the cluster expansion of $\rho_T^{(2+k)}$: 
\begin{align}\label{rhot}
    \rho_T^{(2+k)} (\xx_2,\yy_k)= 
    \sum_{n=0}^\infty\frac{z^{2+k+n}}{n!} 
    \sum_{C\in\mathcal{C}_{2+k+n}}
    \int_{(\R^d)^n} \prod_{\{i,j\}\in E(C)} f_{ij} \diff {\yy'}_n,
\end{align}
where $\mathcal{C}_{2+k+n}$ is the set of connected graphs with $2+k+n$ vertices, $E(C)$ the set of edges of a graph $C$ and  $f_{ij}$ is the \emph{Mayer-function} $f_{ij}:= e^{-u(x_i-x_)}-1$ evaluated at the vertices belonging to $i$ and $j$. Starting from this, we can derive the expansion of $(\fod{k})_{k\geq 1}$ and $(\secod{k})_{k\geq 1}$. We will not study convergence of these series, though that should not be to difficult. In order to describe the graphic representation it is useful to introduce three types of vertices:
\begin{itemize}
    \item red vertices with labels $x_1$ and $x_2$;
    \item white vertices with labels $y_1,\dots,y_k$;
    \item unlabeled black vertices $y'_1,\dots,y'_n$, which are integrated in \eqref{rhot}.
\end{itemize}
We consider the connected components of a graph $C\in \mathcal{C}_{2+k+n}$ after the removal of the two red vertices and the edges attached to them. There are three different types of connected components, see also in Figure \ref{fig:graphs}:
\begin{enumerate}
    \item[(i)] The component contains no white vertices  and was connected to either one (or both) $x_1$ or $x_2$;
    % \item[(ii)] The component contains no white points and is connected to both $x_1$ and $x_2$;
    \item[(ii)] The component contains at least one white vertex and was connected to either $x_1$ or $x_2$, but not to both;
    \item[(iii)] The component contains at least one white vertex and was connected to both $x_1$ and $x_2$.
\end{enumerate}
The graphs of type (i) (all graphs on the left-hand side of Figure \ref{fig:graphs}) make up all the graphs in the expansion of $\rho^{(2)}$; more precisely one collects all graphs of type (i), restores all the edges between these connected components and the red vertices labeled $x_1$ and $x_2$ and a potential edge between $x_1$ and $x_2$. Whereas the graphs of type (ii) make up the graphs in the expansion of $\fexp \bom_{x_1} $, $\fexp \bom_{x_2} $ respectively. Lastly this leaves the graphs of type (iii) for the expansion of $\fexp \bom_{(x_1,x_2)} $, more precisely, one collects all  connected components of type (iii) and restores  all the edges which connected the red vertices to these connected components, but a potential edge between $x_1$ and $x_2$ will not be included.    Thus, one can write
\begin{align}
    \fod{k} (x;\yy_k) = \sum_{n=0}^\infty\frac{z^{1+k+n}}{n!} 
    \sum_{C\in\mathcal{C}_{1\mid k+n}}
    \int_{(\R^d)^n} \prod_{\{i,j\}\in E(C)} f_{ij} \diff {\yy'}_n
\end{align}
and 
\begin{align}
    \secod{k}(\xx_2;\yy_k)=
    \sum_{n=0}^\infty\frac{z^{2+k+n}}{n!} 
    \sum_{C\in\mathcal{C}_{2\mid k+n}}
    \int_{(\R^d)^n} \prod_{\{i,j\}\in E(C)} f_{ij} \diff {\yy'}_n
\end{align}
where $\mathcal{C}_{1\mid k+n}$ is the set of all connected graphs on $1+n+k$ vertices, with one red vertex labeled $x$, $k$ white vertices labeled $x-1, \ldots x_k$ and $n$ black vertices $y'_1, \ldots , y'_n$ such that graph stays connected upon removal of the vertex labeled $x$. Whereas $\mathcal{C}_{2\mid k+n}$ is the set of connected graphs that have $2$ red , $k$ white and $n$ black vertices and there is no edge between the red vertices $x_1 $ and $x_2$, which  stay connected upon the removal of the red vertices and all attached edges. Using tools from cluster expansion one may improve the radius of convergence of the expansions in this case.
\begin{figure}
    \centering
    \begin{tikzpicture}[
    node distance=1.5cm and 1cm,
    reddot/.style={circle, fill=red, minimum size=6pt, inner sep=0pt},
    blackdot/.style={circle, fill=black, minimum size=4pt, inner sep=0pt},
    whitedot/.style={circle, draw=black, fill=white, minimum size=4pt, inner sep=0pt},
    redlabel/.style={above, text=black, font=\small},
    whitelabel/.style={right, text=black, font=\small},
    blacklabel/.style={left, text=black, font=\small}
]

% Red points centered in the picture
\node[reddot] (x1) at (0,1) {};
\node[reddot] (x2) at (0,-1) {};
% \node[redlabel] at (x1.north) {$x_1$};
% \node[redlabel] at (x2.north) {$x_2$};
\node[redlabel] at (0,1.1) {$x_1$};
\node[redlabel] at (0,-1.5) {$x_2$};

% Black Cluster 1 (Top-left) - Triangle (3 points)
\node[blackdot] (b1) at (-2,2.4) {};
\node[blackdot] (b2) at (-2.8,2) {};
\node[blackdot] (b3) at (-2,1.8) {};

% Black Cluster 2 (Bottom-left) - Square (4 points)
\node[blackdot] (b4) at (-2.4,-1.5) {};
\node[blackdot] (b5) at (-3.2,-1) {};
\node[blackdot] (b6) at (-2.4,-.5) {};
\node[blackdot] (b7) at (-3.2,-2) {};

% Black Cluster 3 (Middle-left) - Pentagon (5 points) - connected to both x1 and x2
\node[blackdot] (b8) at (-1.8,0.8) {};
\node[blackdot] (b9) at (-2.2,0.3) {};
\node[blackdot] (b10) at (-3,0) {};
\node[blackdot] (b11) at (-1.8,-0.2) {};
\node[blackdot] (b12) at (-1.2,.3) {};

% Black Cluster 4 (Far-left) - Line (2 points)
\node[blackdot] (b13) at (-2.4,1.2) {};
\node[blackdot] (b14) at (-3.2,0.6) {};

% Labels for black points
% \node[blacklabel] at (b1.west) {$b_1$};
% \node[blacklabel] at (b2.west) {$b_2$};
% \node[blacklabel] at (b3.west) {$b_3$};
% \node[blacklabel] at (b4.west) {$b_4$};
% \node[blacklabel] at (b5.west) {$b_5$};
% \node[blacklabel] at (b6.west) {$b_6$};
% \node[blacklabel] at (b7.west) {$b_7$};
% \node[blacklabel] at (b8.west) {$b_8$};
% \node[blacklabel] at (b9.west) {$b_9$};
% \node[blacklabel] at (b10.west) {$b_{10}$};
% \node[blacklabel] at (b11.west) {$b_{11}$};
% \node[blacklabel] at (b12.west) {$b_{12}$};
% \node[blacklabel] at (b13.west) {$b_{13}$};
% \node[blacklabel] at (b14.west) {$b_{14}$};

% White Cluster 1 (Top-right) - Triangle (3 points)
\node[whitedot] (w1) at (2,2.5) {};
\node[blackdot] (w2) at (2.8,2) {};
\node[blackdot] (w3) at (1.2,2) {};

% White Cluster 2 (Bottom-right) - Square (4 points)
\node[blackdot] (w4) at (2,-1.7) {};
\node[whitedot] (w5) at (2.8,-2) {};
\node[whitedot] (w6) at (1.2,-1.4) {};
%\node[whitedot] (w7) at (2,-0.8) {};

% White Cluster 3 (Middle-right) - Pentagon (5 points) - connected to both x1 and x2
\node[blackdot] (w8) at (1.5,0.8) {};
\node[whitedot] (w9) at (2.6,0.3) {};
\node[blackdot] (w10) at (0.8,0.3) {};
\node[blackdot] (w11) at (1.8,-0.2) {};
\node[blackdot] (w12) at (1.2,-0.2) {};

% White Cluster 4 (Far-right) - Line (2 points)
\node[whitedot] (w13) at (1.6,-1) {};
\node[whitedot] (w14) at (2.4,-.4) {};

% Labels for white points
% \node[whitelabel] at (w1.east) {$w_1$};
% \node[whitelabel] at (w2.east) {$w_2$};
% \node[whitelabel] at (w3.east) {$w_3$};
% \node[whitelabel] at (w4.east) {$w_4$};
% \node[whitelabel] at (w5.east) {$w_5$};
% \node[whitelabel] at (w6.east) {$w_6$};
% %\node[whitelabel] at (w7.east) {$w_7$};
% \node[whitelabel] at (w8.east) {$w_8$};
% \node[whitelabel] at (w9.east) {$w_9$};
% \node[whitelabel] at (w10.east) {$w_{10}$};
% \node[whitelabel] at (w11.east) {$w_{11}$};
% \node[whitelabel] at (w12.east) {$w_{12}$};
% \node[whitelabel] at (w13.east) {$w_{13}$};
% \node[whitelabel] at (w14.east) {$w_{14}$};

% Connections within Black Cluster 1 (Triangle)
\draw (b1) -- (b2) -- (b3) -- (b1);

% Connections within Black Cluster 2 (Square)
\draw (b4) -- (b5) -- (b7) -- (b6) -- (b4);
\draw (b5) -- (b6);

% Connections within Black Cluster 3 (Pentagon)
\draw (b8) -- (b9) -- (b11) -- (b12) ;
\draw (b10) -- (b8);
%\draw (b9) -- (b10);
\draw (b11) -- (b8);

% Connections within Black Cluster 4 (Line)
\draw (b13) -- (b14);

% Connections within White Cluster 1 (Triangle)
\draw (w1) -- (w2) -- (w3) -- (w1);

% Connections within White Cluster 2 (Square)
% \draw (w4) -- (w5) -- (w7) -- (w6) -- (w4);
\draw (w6) -- (w4);
\draw (w5) -- (w4);

% Connections within White Cluster 3 (Pentagon)
\draw (w8) -- (w9) -- (w11) -- (w12) -- (w10) -- (w8);
% \draw (w9) -- (w10);
\draw (w11) -- (w8);

% Connections within White Cluster 4 (Line)
\draw (w13) -- (w14);

% Connect red points to black clusters - straight lines, carefully positioned
% Cluster 1 (triangle): only to x1 (connect to top node)
\draw[red] (x1) -- (b3);
% Cluster 2 (square): only to x2 (connect to bottom node)
\draw[red] (x2) -- (b4);
\draw[red] (x2) -- (b6);
% Cluster 3 (pentagon): to both x1 and x2 (connect to outer nodes)
\draw[red] (x1) -- (b8);
\draw[red] (x1) -- (b12);
\draw[red] (x2) -- (b11);
\draw[red] (x2) -- (b12);
% Cluster 4 (line): only to x1 (connect to top node)
\draw[red] (x1) -- (b13);

% Connect red points to white clusters - straight lines, carefully positioned
% Cluster 1 (triangle): only to x1 (connect to top node)
\draw[red] (x1) -- (w3);
% Cluster 2 (square): only to x2 (connect to bottom node)
\draw[red] (x2) -- (w6);
% Cluster 3 (pentagon): to both x1 and x2 (connect to outer nodes)
% \draw[red] (x1) -- (w8);
\draw[red] (x1) -- (w10);
\draw[red] (x2) -- (w11);
\draw[red] (x2) -- (w12);
% Cluster 4 (line): only to x2 (connect to top node)
\draw[red] (x2) -- (w13);

% Dashed red line between the two red points
\draw[red, dashed, thick] (x1) -- (x2);

\end{tikzpicture}
    \caption{Typical graph of the expansion of $\rho^{(2+k)}$.}
    \label{fig:graphs}
\end{figure}
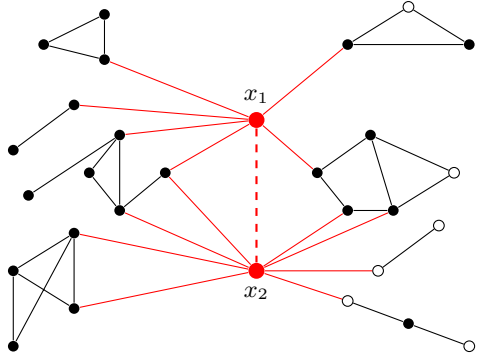

\end{remark}
\begin{remark}
As mentioned in the introduction, in computational Physics the inverse Henderson problem is concerned with recovering the interaction potentials of a given system from the correlation functions. Commonly, iterative methods such as the \emph{inverse Monte-Carlo iteration (IMC)} or the \emph{inverse Boltzmann iteration} are used to approximate a solution. The typical first guess for these iterative methods is the so-called \emph{potential of mean-force}
\begin{align*}
    u_{\textnormal{PMF}}(\xx_2)= -\log \frac{\rho^{(2)}_*(\xx_2)}{\rho^2_*},
\end{align*}
where $\rho_*$ and $\rho^{(2)}_*$ are the density and pair correlation function of the target point process $\PP_*$.
This initial guess corresponds to the zero-order term from \req{uexpansionfinal}. One could instead use the first order approximation, i.e.,
\begin{align*}
    u_0(\xx_2) & = -\log \frac{\rho^{(2)}_*(\xx_2)}{\rho^2_*}
    \\ & +\frac{1}{\rho^{(2)}_*(\xx_2)}\int_{\R^d} \left(
    \rho^{(3)}_{T,*}(\xx_2,y)
    -\rho^{(2)}_{T,*}(\xx_2)\left(
    \frac{\rho^{(2)}_{T,*}(x_1,y)+\rho^{(2)}_{T,*}(x_2,y)}{\rho}\right)
    \right)\dy.
\end{align*}
In particular, since in the IMC iteration the correlation functions up to order $4$ have to be calculated, this approach could be useful.
% where an iterative sequence is constructed by and for $n\geq 1$ by
% \begin{align*}
%     u_{n+1} = u_n +\log\left(
%     \frac{\rho^{(2)}_k }{\rho^{(2)}_*}\right)
% \end{align*}
%  and $\rho^{(2)}_k$ is the pair correlation function of a $u_k$ Gibbs measure. 

\end{remark}

% For fixed $n$, one can rewrite \req{dpptcorr} as the sum over all cycles on the $n$ points $x_1,\dots,x_n$ and the sum of all chains over subsets of $\yy_k$ connecting the points $x_i$ and $x_j$ and get
% \begin{align}
%     \rho_T^{(n+k)}(\xx_{n},\yy_k)=z^{n+k}\sum_{\sigma\in S_{n}^{cy}}\sum_{\ww_0,\dots,\ww_n \subset \yy_k \atop 
%     \ww_i\cap\ww_j = \emptyset, \text{ }\sum |\ww_i| = k}\prod_{j=1}^{n}  
%     k(x_j-w_{j_1})
%     \prod_{l=1}^{|\ww_j|-1}k(w_l-w_{l+1})
%     k(w_{j_{|\ww_j|}}-x_{\sigma(j)})
% \end{align}
% where in case some $\ww_i$ is empty 
% \begin{align*}
%     k(x_j-w_{j_1})
%     \prod_{l=1}^{|\ww_j|-1}k(w_l-w_{l+1})
%     k(w_{j_{|\ww_j|}}-x_{\sigma(j)})
%     =k(x_j-x_{\sigma(j)}).
% \end{align*}
% Thus, it follows that 
% \begin{align*}
%     \int_{\Lambda^k}\left|
%     \rho_T^{(n+k)}(\xx_{n},\yy_k)
%     \right|\dyy_k
%     &\leq 
%     z^{n+k}
%     \sum_{\sigma\in S_{n}^{cy}}
%     \sum_{\ww_0,\dots,\ww_n \subset \yy_k \atop 
%     \ww_i\cap\ww_j = \emptyset, \text{ }\sum |\ww_i| = k}\\&\prod_{j=1}^{n} 
%     \int_{\Lambda^{|\ww_j|}}
%     |k(x_j,w_{j_1})|
%     \prod_{l=1}^{|\ww_j|-1}
%     |k(w_l,w_{l+1})|
%     |k(w_{j_{|\ww_j|}},x_{\sigma(j)})|\dww_j.
% \end{align*}
% Following the proof in \cite{Duneau75}

\subsection{The Kirkwood-closure process}
For some $ \varsigma>0$ and an even bounded function $g \colon \R^d\to [0, \infty)$, the \emph{Kirkwood closure process} is the point process $\KK$ whose correlation functions are given by
\begin{align}\label{eq:easycorrelations}
    \rho^{(n)}(\xx_n) =  
    \varsigma^n \prod_{1\leq i < j\leq n} g(x_i-x_j).
\end{align}
When $g=e^{-u}$ for some regular for which there is a $B>0 $ such that for any $n\in \mathbb{N}$ and $x,x_1,\dots,x_n \in \R^d$ there holds
\begin{align}
    \sum_{i=1}^n u(x-x_i) \geq -B,
\end{align} 
the Kirkwood closure process exists whenever $\varsigma$ is sufficiently small, \cite{Kuna07,Frommer25}
and stable pair potential as in Example \ref{ss:pp}, the Kirkwood closure process exists whenever $\varsigma$ is sufficiently small, cf.~\cite{Frommer25}. 

In this case, Assumption \ref{ass:C} is trivially satisfied as $\rho=\varsigma $ and thus
\begin{align}
    \rho^2 = e^{u}\rho^2 e^{-u} =e^{u} \rho^{(2)}(\xx_2)
\end{align}
and $d(r) :=  \sup_{|x_1-x_2|>r}e^{u(x_2-x_1)}$ similar to what has been done in Subsection~\ref{ss:pp} for Gibbs measure with  pair interactions. It is easy to see, e.g. \cite{Kuna07}, that the truncated correlation functions of the Kirkwood closure process are given by
\begin{align}\label{eq:clusterkirkwood}
    \rho^{(k)}_T(\xx_{n })= \rho^n \sum_{C\in\mathcal{C}_n}\prod_{\{ij\}\in E(C)} h(x_i-x_j),
\end{align}
where $\mathcal{C}_n$ is the set of connected graphs on $n$ vertices and $E(C)$ is the edge set of the graph $C$ and $h=g-1$ is the so-called \emph{structure function}. One can use the same techniques used for the Ursell functions of pair interaction Gibbs measures, see e.g.~\cite{Ruelle69,Procacci17}, to see that the Kirkwood closure process satisfies Assumption \ref{ass:B} for any $m\in\N$.

In order to check Assumption \ref{ass:A} one needs that the Kirkwood closure process is  a Gibbs measure, which is studied in \cite{Frommer25}.
The simple structure of the correlation functions \req{easycorrelations} allows for a more direct calculation of the functions $(\fod{k})_{k\geq 1}$ and $(\secod{k})_{k\geq 1}$.
\begin{proposition}
In the case of the Kirkwood closure process there hold
\begin{align}\label{eq:kirkwoodfod}
    \fod{k}(x,\yy_k) = %\left[\prod_{m=1}^k\Big( 1+h(x-y_m)\Big)-1\right]
    \rho^k \sum_{C\in\mathcal{C}_{1,k}}\prod_{\{ij\}\in E(C)} h_{ij}
\end{align}
and
\begin{align}\label{eq:kirkwoodsecod}
    \secod{k}(\xx_2,\yy_k) =\rho^k \sum_{C\in\mathcal{C}_{2,k}}\prod_{\{ij\}\in E(C)} h_{ij},
\end{align}
where the sets $\mathcal{C}_{1,k}$ and $\mathcal{C}_{2,k}$ are  the sets of all connected graphs with $1$ (resp. $2$) white and $k$ black vertices that contain no edges between white vertices and stay connected when the white vertices and there adjoining edges are removed.% See also Figure \ref{fig:thegraphsofF}.
\end{proposition}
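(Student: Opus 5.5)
The plan is to verify the two claimed formulas directly in the Ruelle algebra. Since $\fexp$ is injective on $\A_0$ (the recursions \eqref{eq:rhotilde_2} and \eqref{eq:rhodoubletildeALT} determine $\bom_x$ and $\bom_{\{x_1,x_2\}}$ uniquely), it suffices to show that the proposed graph sums satisfy \eqref{eq:rhotilde_1} and \eqref{eq:rhodoubletilde}. Throughout I use that $\bro=\fexp\bro_T$ together with \eqref{eq:clusterkirkwood}. I also use the product structure \eqref{eq:easycorrelations}, which gives
\[
D_{\{x\}}\bro(\yy_k)=\rho\prod_{m}g(x-y_m)\,\rho^{(k)}(\yy_k).
\]
Writing $g=1+h$ and expanding the product over the edges incident to $x$, this becomes $\rho$ times $\rho^k$ times a sum over all graphs on $\{x\}\cup\yy_k$. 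These graphs have arbitrary $h$-edges from $x$, and the edges among the $y$'s are arbitrary $h$-edges coming from $\prod g(y_i-y_j)$.

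For \eqref{eq:kirkwoodfod}, I decompose each such graph by removing the white vertex $x$ and taking the connected components of what remains on $\yy_k$. A component that carries no edge to $x$ contributes a factor of $\bro_T$, by \eqref{eq:clusterkirkwood}. A component that is joined to $x$ by at least one edge is exactly an element of $\mathcal{C}_{1,\kappa}$ on its vertex set. The sum factorizes over set partitions, which yields
\[
D_{\{x\}}\bro=\rho\,\fexp\bro_T\ast\fexp\bom_x^{\mathrm{graph}}.
\]
On the other hand, the chain rule \eqref{eq:chainrule} gives $D_{\{x\}}\bro=\fexp\bro_T\ast D_{\{x\}}\bro_T$. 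I then cancel $\fexp\bro_T$, which is invertible in $\A$ with inverse $\fexp(-\bro_T)$, and conclude $D_{\{x\}}\bro_T/\rho=\fexp\bom_x^{\mathrm{graph}}$. This is \eqref{eq:rhotilde_1}.

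For \eqref{eq:kirkwoodsecod}, I proceed the same way with $D_{\{x_1,x_2\}}\bro(\yy_k)=\rho^2g(x_1-x_2)\prod_m g(x_1-y_m)g(x_2-y_m)\,\rho^{(k)}(\yy_k)$. Note that $\rho^{(2)}(\xx_2)=\rho^2 g(x_1-x_2)$, so the $x_1$–$x_2$ factor is absorbed entirely in the prefactor. After deleting the two white vertices, the components of the $y$-graph split into four types: those attached to neither white vertex, giving $\bro_T$; those attached only to $x_1$, giving $\bom_{x_1}$; those attached only to $x_2$, giving $\bom_{x_2}$; and those attached to both, giving the graphs of $\mathcal{C}_{2,\cdot}$ on their vertex set. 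Hence
\[
D_{\{x_1,x_2\}}\bro=\rho^{(2)}(\xx_2)\,\fexp(\bro_T+\bom_{x_1}+\bom_{x_2}+\bom^{\mathrm{graph}}_{\{x_1,x_2\}}).
\]
By the computation in the proof of Lemma~\ref{lemma:split}, read backwards, the same identity holds with the true $\bom_{\{x_1,x_2\}}$ defined through \eqref{eq:rhodoubletilde}. Cancelling the common invertible factors by \eqref{eq:powerrule} and injectivity of $\fexp$ then gives $\bom_{\{x_1,x_2\}}=\bom^{\mathrm{graph}}_{\{x_1,x_2\}}$.

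The main obstacle is the combinatorial bookkeeping in the factorization step. One has to check that the map from a graph to the set partition of $\yy_k$ into components, together with the connected pieces and their white-edge sets, is a bijection. One also has to check that the resulting weights match the $1/l!$ normalisation in $\fexp$, and that in the two-point case "attached to both" is exactly the defining condition of $\mathcal{C}_{2,k}$. I would handle this by working with unordered set partitions of the labeled set $\yy_k$, which matches the partition formula for $(\fexp\bP)^{(n)}$ stated above. All series involved are finite sums for fixed $k$, so no convergence issues arise.
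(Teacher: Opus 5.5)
Your proposal is correct and follows essentially the paper's route. The paper likewise writes $D_{\{x_1,x_2\}}\bro(\yy_k)/\rho^{(2)}(\xx_2)=\prod_j g(x_1-y_j)g(x_2-y_j)\,\rho^{(k)}(\yy_k)$ and distributes this product over the blocks of $\bro=\fexp\bro_T$, which gives $F_{2,k}=\prod_j g(x_1-y_j)g(x_2-y_j)\,\rho_T^{(k)}(\yy_k)$. It repeats this for one point and extracts $\bom_{\{x_1,x_2\}}$ via Lemma~\ref{lemma:split}. Your classification of the components of the $y$-graph by attachment type is the graph-level counterpart of expanding $\prod_j(1+h(x_1-y_j))(1+h(x_2-y_j))$ into the corresponding four brackets and then applying \eqref{eq:clusterkirkwood}.
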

%\begin{figure}
 %   \centering
  %  \includegraphics[width=\textwidth]{EntwickeltMitSets.pdf}
   % \caption{The graphs associated to $F_{2,1}$ and $F_{2,2}$ }
   % \label{fig:thegraphsofF}
%\end{figure}
\begin{proof}
% Straight forward computation of $(F_{2, k})_{k\geq 1}$ 
First, we will compute $(F_{2, k})_{k\geq 1}$ directly from \eqref{eq:fnk} using \req{easycorrelations} 
\begin{align*}
    \frac{\rho^{(2+k)}(\xx_2,\yy_{k })}{\rho^{(2)}(\xx_2)} &= \frac{\rho^2g(x_1-x_2)\prod_{j=1}^k\Big( 1+h(x_1-y_j)\Big)\Big(1+h(x_2-y_j)\Big)\rho^{(k)}(\yy_{k })}{\rho^2g(x_1-x_2)} \\
    &=\prod_{j=1}^k\Big( 1+h(x_1-y_j)\Big)\Big(1+h(x_2-y_j)\Big)\rho^{(k)}(\yy_{k }).
\end{align*}
Using the definition of the truncate correlation function, see e.g.\req{clusterfunctions}, it follows that
\begin{align*}
    &  \frac{\rho^{(2+k)}(\xx_2,\yy_{k })}{\rho^{(2)}(\xx_2)} 
    \\ &=\prod_{j=1}^k\Big( 1+h(x_1-y_j)\Big)\Big(1+h(x_2-y_j)\Big)\sum_{l=1}^k\frac{1}{l!}\sum_{\pi\in\Pi_l(\{\yy_k\})}\prod_{m=1}^l\rho^{(\kappa_m)}_T(\pi_m) \\
    &=
    \sum_{l=1}^k\frac{1}{l!}\sum_{\pi\in\Pi_l(\{\yy_k\})}\prod_{m=1}^l\prod_{y'\in\pi_m}^k\Big( 1+h(x_1-y')\Big)\Big(1+h(x_2-y')\Big)\rho^{(\kappa_m)}_T(\pi_m) .
\end{align*}
This shows that
the family $(F_{2, k})_{k\geq 1}$ is given by
\begin{align}\label{eq:f2kkirkwood}
    F_{2, k}(\xx_2,\yy_{k}) 
    = \prod_{j=1}^k\Big( 1+h(x_1-y_j)\Big)\Big(1+h(x_2-y_j)\Big)\rho^{(k)}_T(\yy_{k }).
\end{align}
Repeating the same calculation for $\rho^{(1+k)}(x_i,\yy_{k })/\rho^{(1)}(x_i) $ then proves the claim using  \req{fnkdirect}.

% Furthermore for the Kirkwood-closure process This lets us give an interpretation of $F_{2,k}$ in terms of graphs, for that let us think of the vertices associated to the $\yy_k$ as black and the two vertices associated to $x_1$ and $x_1$, then we can think of a multiplication with $h(x_1-y_j)$ (or $h(x_2-y_j)$) for some $j=1,\dots,k$ as adding an edge between the vertex $x_1$ (respectively $x_2$) and $y_j$ and we can write
% \begin{align*}
%     &\prod_{j=1}^k\Big( 1+h(x_1-y_j)\Big)\Big(1+h(x_2-y_j)\Big)
%     = 1
%     +\left[\prod_{j=1}^k\Big( 1+h(x_1-y_j)\Big)-1\right]
%     +\left[\prod_{j=1}^k\Big(1+h(x_2-y_j)\Big)-1\right] \\
%     &+\left[\prod_{j=1}^k\Big( 1+h(x_1-y_j)\Big)\Big(1+h(x_2-y_j)\Big)
%     - \left[\prod_{j=1}^k\Big( 1+h(x_1-y_j)\Big)-1\right]
%     -\left[\prod_{j=1}^k\Big(1+h(x_2-y_j)\Big)-1\right] -1\right]
% \end{align*}
\end{proof}
The convergence of the representations \req{kirkwoodfod} and \req{kirkwoodsecod} in the case of  the Kirkwood closure process can be shown directly using tree-graph inequality , see e.g.~\cite{Procacci17}, one can obtain a bound similar to Proposition \ref{prop:bound}.
% \begin{align}
%     \frac{1}{\rho^n}\int_{\Lambda^k}\left|
%     \rho_T^{(n+k)}(\xx_{n},\yy_k)
%     \right|\dyy_k
%     &\leq 
%     e^{B(n+k)}\sum_{C\in\mathcal{C}_k}\prod_{\{ij\}\in E(C)} h(y_i-y_j)
% \end{align}

{
\subsection{Determinantal point processes}
Let $\kappa \in L^2(\R^d)$ be a positive semidefinite even function with $\kappa (0)=1$ and $z>0$. Then there is a locally trace class operator $K \colon L^2(\R^d)\to L^2(\R^d) , f \mapsto Kf$ defined by
\begin{align*}
    (KG) (x) := \int_{\R^d} G(y)z\kappa(x-y)\dy.
\end{align*}
If $I-K$ is positive semidefinite there is a translation invariant point process $\PP_\kappa$ such that the correlation functions of $\PP$ are given by
\begin{align}\label{eq:detcorr}
    \rho^{(n)}(\xx_n) = z^n\det \left(
    \kappa(x_i-x_j)
    \right)_{1\leq i,j \leq n},
\end{align}
cf.~\cite{Soshnikov00}. In particular, the density of a determinantal point process
is given by 
\begin{align}\label{eq:dppact}
    \rho^{(1)}(x)\equiv \rho = z.
\end{align}
Since the truncated correlation functions of this point process are given by (see \cite{Soshnikov00})
\begin{align}\label{eq:dpptcorr}
    \rho_T^{(n)}(\xx_n)= (-1)^{n-1} z^n\sum_{\sigma\in S_n^{\textnormal{cy}}}\prod_{j=1}^n \kappa(x_j-x_{\sigma(j)}),
\end{align}
where $S_n^{\textnormal{cy}}$ is the set of cyclic permutations of $\{1,\dots,n\}$ one may use the techniques Duneau et al. used in \cite{Duneau76} for chain graphs to treat the integrals over cyclic permutations to show that Assumption \ref{ass:B} is satisfied if $\kappa$ is bounded.\\ 
Assumption \ref{ass:C} in this context means that one can find $d(r)$ such that 
\begin{align*}
    1 \leq d(r)(1-\kappa(x)^2)
\end{align*}
for all $|x| \geq r$.\\
Lastly, in \cite{Georgii04} Georgii and Yoo gave conditions on $K$ under which this determinantal point process is Gibbs in a general sense.
%and proved therein (Proposition 4.2 and the discussion thereafter) that under these conditions Assumption \ref{ass:A} holds. Thus, 
For $z$ sufficiently small, one should expect that equations
\req{muexpansionfinal} and \req{uexpansionfinal} hold.
}

\section*{Acknowledgement}
The authors thank J.~P.~Neumann for some suggestions to make the proof of Lemma \ref{lemma:split} much easier to read with Ruelle's calculus. F.~F. acknowledges support by the Collaborative Research Center TRR~146 with corresponding funding from the DFG. D.~T. acknowledges financial support from the Italian Research Funding Agency
(MIUR) through the PRIN project ``Emergence of condensation-like phenomena in interacting particle systems: kinetic and lattice models'', grant n. 202277WX43, as well as from
Istituto Nazionale di Alta Matematica (INDAM). Furthermore, T.K.\ and D.T.\ are members of INdAM-GNAMPA.

\section*{Declaration}
Data sharing is not applicable to this article as no datasets were generated or analysed. The authors state that there are no conflicts of interest.

\end{document}